

\documentclass[a4paper, 10pt, conference]{ieeeconf}      

\IEEEoverridecommandlockouts                              

\overrideIEEEmargins                                      


\usepackage{fix-cm}
\usepackage{etex}

\usepackage{dblfloatfix}

\usepackage{nag}


\makeatletter
\@ifpackageloaded{xcolor}{}{%
\usepackage[table,x11names,dvipsnames,svgnames]{xcolor}%
}
\makeatother

\usepackage{colortbl}

\usepackage{graphicx}
\usepackage{wrapfig}


\usepackage{cite}

\usepackage{microtype}

\usepackage{array}
\usepackage{multirow}
\usepackage{booktabs}
\usepackage{makecell} 

\newcommand{\myparagraph}[1]{\textbf{\emph{#1}}.}

\ifcsname labelindent\endcsname

\fi
\usepackage[inline]{enumitem}

\usepackage{subfig}

\setcounter{topnumber}{2}
\setcounter{bottomnumber}{2}
\setcounter{totalnumber}{4}

\newenvironment{lenumerate}[2][]
{\begin{enumerate}[label=(#2\arabic*),leftmargin=0.2in,itemindent=0.15in,#1]}
{\end{enumerate}}



\setlist*[enumerate,1]{label={\itshape\arabic*)}}

\makeatletter
\newcommand{\paragraphswithstop}{%
\let\copyparagraph\paragraph%
\renewcommand\paragraph[1]{\copyparagraph{##1.}}%
}
\makeatother

\usepackage[framemethod=tikz]{mdframed}


\input{math}
\newcommand{\real}[1]{\mathbb{R}^{#1}{}}







\newcommand{\vct}[1]{\mathbf{#1}}


\DeclareMathOperator{\diag}{diag}

\DeclareMathOperator{\stack}{stack}







\input{utilities}




\providecommand{\mH}{\vct{H}}
\providecommand{\mI}{\vct{I}}
\providecommand{\mJ}{\vct{J}}

\providecommand{\mL}{\vct{L}}

\providecommand{\mP}{\vct{P}}

\providecommand{\mR}{\vct{R}}

\providecommand{\cA}{\mathcal{A}}

\providecommand{\cE}{\mathcal{E}}
\providecommand{\cF}{\mathcal{F}}
\providecommand{\cG}{\mathcal{G}}
\providecommand{\cH}{\mathcal{H}}
\providecommand{\cI}{\mathcal{I}}
\providecommand{\cJ}{\mathcal{J}}
\providecommand{\cK}{\mathcal{K}}
\providecommand{\cL}{\mathcal{L}}

\providecommand{\cN}{\mathcal{N}}

\providecommand{\cQ}{\mathcal{Q}}

\providecommand{\cS}{\mathcal{S}}

\providecommand{\cV}{\mathcal{V}}

\providecommand{\cX}{\mathcal{X}}


\usepackage{units}


\newcommand{\newcolorlabel}[2]{%
  \expandafter\newcommand\csname #1\endcsname[1]{%
    \colorbox{#2}{\color{white}\textsf{\textbf{##1}}}}%
}

%
\newcommand{\newcommenter}[2]{%
  \expandafter\newcommand\csname #1\endcsname[1]{%
    \fcolorbox{#2}{#2}{\color{white}\textsf{\textbf{#1}}}
    {\color{#2}##1}}%
  \expandafter\newcommand\csname at#1\endcsname{%
    \fcolorbox{#2}{#2}{\color{white}\textsf{\textbf{@#1}}}
    {\color{#2}}}%
  \expandafter\newcommand\csname #1hl\endcsname[2]{%
    \colorbox{#2}{\color{white}\textsf{\textbf{#1}}}\sethlcolor{Azure2}\hl{##2}~%
    \expandafter\ifx\csname commentarrow\endcsname\relax$\leftarrow$\else \commentarrow[#2]\fi~%
    {\color{#2}##1}}%
  \expandafter\newcommand\csname #1st\endcsname[2]{%
    \colorbox{#2}{\color{white}\textsf{\textbf{#1}}}\sout{##2}~%
    \expandafter\ifx\csname commentarrow\endcsname\relax$\leftarrow$\else \commentarrow[#2]\fi~%
    {\color{#2}##1}}%
}
\newcommenter{TODO}{DodgerBlue1}
\newcommenter{rtron}{Green3}


\usepackage{comment}

\usepackage{pdfcomment}

\usepackage{soul}

\usepackage[normalem]{ulem}

\usepackage{csquotes}


\providecommand{\trp}{\mathsf{T}}
\usepackage{bm}
\usepackage{mathrsfs}
\usepackage{pgfplots} 

\usetikzlibrary{plotmarks}
\usetikzlibrary{positioning}
\usetikzlibrary{arrows, automata}
\usetikzlibrary{decorations.markings}
\pgfplotsset{plot coordinates/math parser=false} 
\newlength\figureheight 
\newlength\figurewidth 
\newtheorem{conjecture}{Conjecture}

\tikzset{
  treenode/.style = {align=center, inner sep=0pt, text centered,
    font=\sffamily},
  arn_n/.style = {treenode, circle, white, font=\sffamily\bfseries, draw=black,
    fill=black, text width=1.5em},
  arn_r/.style = {treenode, circle, red, draw=red, 
    text width=1.5em, very thick},
  arn_x/.style = {treenode, rectangle, draw=black,
    minimum width=0.5em, minimum height=0.5em}
}

\usetikzlibrary{shapes.misc}

\tikzset{cross/.style={cross out, draw=black, minimum size=2*(#1-\pgflinewidth), inner sep=0pt, outer sep=0pt},
cross/.default={1pt}}

\title{\LARGE \bf Bearing-Only Consensus and Formation Control under Directed Topologies}

\author{Arman Karimian and Roberto Tron
\thanks{The authors are with the Department of Mechanical Engineering of Boston University, 
Boston, MA. E-mail: {\tt\small \{armandok,tron\}@bu.edu}.}%
}

\begin{document}

\maketitle
\thispagestyle{empty}
\pagestyle{empty}

\begin{abstract} 



We address the problems of bearing-only consensus and formation control, where each agent can only measure the relative bearings of its neighbors and relative distances are not available. We provide stability results for the Filippov solutions of two gradient-descent laws from non-smooth Lyapunov functions in the context of differential inclusion. For the consensus and formation control problems with undirected sensing topologies, we prove finite-time and asymptotic convergence of the proposed non-smooth gradient flows. For the directed consensus problem, we prove asymptotic convergence using a different non-smooth Lyapunov function given that the sensing graph has a globally reachable node. Finally, For the directed formation control problem we prove asymptotic convergence for directed cycles and directed acyclic graphs and also introduce a new notion of \emph{bearing persistence} which guarantees convergence to the desired bearings.

\end{abstract}

\section{INTRODUCTION}
Distributed and cooperative control of multi-agent systems using \textsl{relative bearing measurements} has gained a growing interest in recent years \cite{bishop2015distributed, zelazo2015bearing, schiano2018dynamic, trinh2018bearing}. Using bearing measurements, which are relative directions between agents, as opposed to relative positions is motivated by the use of vision-based sensors. Such sensors provide precise measurements of direction between agents while the corresponding distances are generally not known exactly. 

The first problem addressed in this paper is the multi-robot rendezvous problem, which is the task of steering robots such that they eventually converge to the same location. For robots with single integrator dynamics, this problem is essentially the same as the consensus problem and has been extensively studied in the literature when difference between the states are available to agents through communication \cite{olfati2004consensus}. However, this task is not fully explored for the bearings-only case \cite{zhao2017flexible}.

The bearing-only formation control problem, whose goal is to steer a group of agents to a set of desired relative positions, is the second problem we address. In the literature, two general solutions for this task has been presented in \cite{zhao2015bearingrigid} and \cite{tron2016bearing} for single integrator dynamics. The controller given in \cite{zhao2015bearingrigid} uses an ad hoc protocol based on projector matrices while \cite{tron2016bearing} is based on minimizing a positive definite function through gradient descent. Both of these approaches, however, are limited to undirected graphs, i.e. agents should sense their relative bearings in a bidirectional manner. In \cite{zhao2015bearing}, a controller is presented for directed graphs, but relies on relative positions and the stability of the controller is not proved. In \cite{trinh2018bearing}, the controller in \cite{zhao2015bearingrigid} was extended to the Leader-First Follower structures.

The notion of bearing persistence, as was introduced in \cite{zhao2015bearing}, ensures that the desired formation is achievable in directed interaction topologies.
In addition, the notion of infinitesimal bearing rigidity (or simply rigidity) \cite{zhao2015bearingrigid} is key in guaranteeing that for a given set of bearing measurements between a group of agents, a unique class of solutions exist which only differ by a global translation, rotation and scaling of the agents' positions. While the second notion has been a subject of interest in the past years \cite{karimian2017theory, arrigoni2018bearing} , bearing persistence is fairly new and needs more attention.

The inherent discontinuous nature of bearing measurements yields differential equations with discontinuous righthand side and their proof of stability usually requires non-smooth Lyapunov functions. We present stability results in the more general context of differential inclusion for consensus and formation control problem using bearing measurements only.

\noindent\myparagraph{Paper motivation}
For the consensus problem, in \cite{zhao2017flexible} a proof of stability was presented for undirected graphs, however, proof of finite-time convergence was lacking. In \cite{cortes2006finite}, a controller with bearings were proposed with finite-time convergence, however, it was limited to one dimensional space.
For the formation control problem, the existing results for directed graphs are very limited and also the definition of bearing persistence given in \cite{zhao2015bearing} is based on a controller that requires relative positions and is not compatible with a bearing-only controller.

\noindent\myparagraph{Paper contributions}
In this paper, we focus on agents with single integrator dynamics and assume that the agents have agreed on a common reference frame. Furthermore, we presume that there are no constraints on the field of view of agents and their sensors are omni-directional. Under these assumptions, for the consensus problem we extend the controller in \cite{cortes2006finite} to higher dimensions and to directed graphs. For undirected graphs, we prove that convergence happens in finite time. For the directed graphs, we only establish asymptotic stability and leave finite time convergence as a conjecture.
For the formation control problem, we prove that the controller in \cite{tron2016bearing} stabilizes directed acyclic graphs and also directed cycle graphs. We present a new definition for bearing persistence and also provide a counter example for the conjecture made in in \cite{zhao2015bearing} on stability of the given controller.




\section{NOTATION AND PRELIMINARIES}
\subsection{General notation}
We denote the dimension of workspace by $d$. The cardinality of a set $\cS$ is given by $|\cS|$ and its convex hull and convex closure is given by $\mathrm{co}(\cS)$ and $\overline{\mathrm{co}}(\cS)$. The euclidean norm is denoted by $\|.\|$ and the Kronecker product is denoted by $\otimes$. The $d-$dimensional open and close ball centered at $\vct c$ with radius $r$ are denoted as $\mathbb{B}_d(\vct c,r)$ and $\bar{\mathbb{B}}_d(\vct c,r)$ respectively. We denote the identity matrix by $\mI_d\in\real{d\times d}$ and $\vct 1_d\in\real{d}$ denotes the column vector of all ones.
The $\stack(.)$ and $\diag(.)$ operators are used to stack column vectors vertically into a bigger column vector and square matrices diagonally into a bigger square matrix.
A \emph{projection matrix} $\mP(\vct v)$ for a vector $\vct v\in\real{d}$ is defined by:
\begin{equation}
\mP(\vct v)\triangleq \mI_d - \frac{\vct v \vct v^\trp}{\|\vct v\|^2},
\end{equation}
and is symmetric and positive semidefinite with a single zero eigenvalue that corresponds to the eigenvector $\vct v$.\\
\subsection{Graph Theory and Formations}
A \emph{(directed) graph} $\cG=(\cV,\cE)$ is given by a set of vertices $\cV=\{1,\dots,n\}$ connected by directional edges given by the set $\cE \subseteq \cV \times \cV$. An \emph{undirected graph} is a  graph where for every edge $(i,j)\in\cE$ the opposite edge $(j,i)$ is also in $\cE$. The complement of $\cE$ is given by $\bar{\cE}\triangleq \{ (j,i) : (i,j)\in\cE\}$. The set of neighbors of a vertex $v$ is given by $\mathcal{N}^+_v$ and $\mathcal{N}^-_v$, where the former contains the vertices to which an outgoing edge from $v$ exists and the later contains the vertices with ingoing edges to $v$. For an undirected graph, these two sets are equal and denoted as $\mathcal{N}_v$. A \emph{weighted graph} $\cG=(\cV, \cE, \cA)$  is a graph with positive weights $a_{ij}\in\real{}$ associated to every edge $(i,j)$ in $\cE$ such that $a_{ij}=a_{ji}$ if $(j,i)$ is also in $\cE$, and the \emph{adjacency matrix} $\cA=[a_{ij}]\in\real{n\times n}$ holds all the weights such that weight of edges not in $\cE$ is zero. The \emph{degree matrix} $\Delta=\diag(a_i)\in\real{n\times n}$ is a diagonal matrix with entries equal to the sum of the rows of $\cA$, i.e., $a_i=\sum_{j\in\cV} a_{ij}$.

An orientation of a graph $\cG=(\cV,\cE)$ is given by $\cG^\sigma=(\cV,\cE^\sigma)$ with $|\cE^\sigma|=m$ such that every edge $e \in \cE$ only appears in one direction in $\cE^\sigma=\{e_k\}_{k=1}^{m}$ in some arbitrary ordering.
The \emph{Oriented Incidence matrix} $\cH = [h_{ve}]\in\{\pm1,0\}^{n\times m}$ is such that for every $e_k = (i,j)\in\cE^\sigma$ we have $h_{ik}=1$ and $h_{jk}=-1$ and zero otherwise.
The \emph{Directed Oriented Incidence matrix} is given by $\cH_+ = [g_{ve}]\in\{\pm1,0\}^{n\times m}$ where 
\begin{equation}
 g_{ik} =
    \begin{cases}
      1 & e_k = (i,j)\in\cE \text{ and } e_k\in \cE^\sigma \\
      -1 &  (i,j)\in\cE \text{ and } e_k =(j,i)\in \cE^\sigma \\
      0 & \text{otherwise}.
    \end{cases}
\end{equation}
If the graph is undirected we have $\cH_+=\cH$. The \emph{Laplacian matrix} is given by $\cL \triangleq \Delta - \cA = \cH_+ \diag(w_1,\dots,w_m)\cH$ where $w_k = \max(a_{ij},a_{ji})$ for $e_k = (i,j)\in\cE^\sigma$.
In this paper, we make the standing assumption that graphs are free of self-loops (i.e. $(i,i)\notin \cE, \forall i\in\cV$), and weights are nonnegative.

A formation $\cF=(\cG,\vct{x})$ is a pairing of the vertices of $\cG$ with the vector $\vct{x}=\stack(\vct{x}_1,\dots, \vct{x}_n)\in\real{nd}$ where vertex $v$ is assigned to $\vct{x}_v\in\real{d}$ for all $v\in\cV$. For an edge $(i,j)\in\cE^\sigma$, the corresponding \emph{bearing measurement} $\vct{u}_{ij} \in \real{d}$ is defined by:
\begin{equation}
\vct{u}_{ij} \triangleq
	\begin{cases}
	\dfrac{\vct{x}_j-\vct{x}_i}{d_{ij}} & d_{ij}\neq 0\\
	\vct 0 & d_{ij}= 0
	\end{cases}
	\label{eq_bearingdef}
\end{equation}
with $d_{ij} \triangleq \|\vct{x}_j-\vct{x}_i \|$ being the Euclidean distance between vertices $i$ and $j$.

\subsection{Formation Equivalence and Bearing Rigidity}

Two formations $\cF=(\cG,\vct{x})$ and $\tilde{\cF}=(\cG,\tilde{\vct{x}})$ are:
\begin{itemize}
\item \emph{Identical} if $\vct{x}=\tilde{\vct{x}}$.
\item \emph{Congruent} if $\vct{x}=\tilde{\vct{x}}+\vct{1}_n\otimes\vct{t}$ for some $\vct{t}\in\real{d}$.
\item \emph{Similar} if $\vct{x}=s\tilde{\vct{x}}+\vct{1}_n\otimes\vct{t}$ for some $s >0$ and $\vct{t}\in\real{d}$.
\item \emph{Equivalent} if $\vct u_{ij}=\tilde{\vct u}_{ij}$ for every $(i,j)\in\cE$.
\end{itemize}
A framework $\cF$ is said to be \emph{(infinitesimally bearing) rigid} if every framework $\tilde\cF$ that is equivalent to $\cF$ is also similar to $\cF$. Intuitively, any two rigid frameworks with the same underlying graphs $\cG$ and equal bearing measurements must have a similar shape up to a translation and a scaling factor.



\section{BEARING-ONLY CONSENSUS}
Linear consensus problems in networks with fixed undirected topologies reach consensus on a common state by minimizing the Laplacian potential which is the sum of squared differences between the states of neighboring agents \cite{olfati2004consensus}. In formation consensus application, for a formation $\cF$ with a connected and undirected graph $\cG = (\cV,\cE)$, the Laplacian potential is defined as:
\begin{equation}
\phi(\vct x) = \frac{1}{2}\sum_{\{(i,j),(j,i)\}\subseteq\cE} \|\vct x_j-\vct x_i\|^2 =\frac{1}{2} \vct x^\trp \mL \vct x
\label{eq:phi} 
\end{equation}
with $\mL = \cL\otimes \mI_d$ being the \emph{inflated Laplacian matrix} with constant unit weights for edges in $\cE$. The potential function \eqref{eq:phi} is obtained by summing the smooth edge potentials $\phi_{\{i,j\}}(\vct x_i,\vct x_j)= \frac{1}{2}d_{ij}^2$ over all edges. By setting the velocity of each agent to negative of the derivative of $\phi$ with respect to its position, we get:
\begin{equation}
\dot{\vct x}_i = -\frac{\partial \phi}{\partial \vct x_i} = \sum_{j\in\cN_i} \vct x_{j} - \vct x_{i} 
\end{equation}
or equivalently $\dot{\vct x} = -\mL \vct x$. Since $\mL$ is a constant and positive semi-definite matrix, the agents converge exponentially to their centroid and the rate of convergence is lower-bounded by the algebraic connectivity of $\cG$. Moreover, the centroid does not change at all times and agents converge to the centroid of their initial formation. However, this controller requires every agent to know its relative position with respect to all its neighbors, i.e. $\dot{\vct x}_i = \sum_{j\in\cN_i} d_{ij}\vct u_{ij}$. 

In this section we will show that only knowing the relative bearing measurements $\vct u_{ij}$ is enough for reaching consensus in finite time. We will prove that for a directed graph, consensus is reached by the controller:
\begin{equation}
\dot{\vct x}_i = \sum_{i\in\cN_i^+} \vct u_{ij},
\end{equation} if the graph has a globally reachable node.
We first begin with undirected graphs as a special case, then we will discuss the general case of directed graphs.

\subsection{Undirected graphs}

Consider the convex and non-smooth edge potential function $\varphi_{\{i,j\}}(\vct x_i,\vct x_j)= d_{ij}$, summed over all edges in $\cE$:
\begin{equation}
\varphi(\vct x) = \sum_{\{(i,j),(j,i)\}\subseteq\cE}\varphi_{\{i,j\}}
\label{eq:varphiedgetotal}
\end{equation}
By setting the velocity of each of the single-integrator agents to the opposite of the gradient of \eqref{eq:varphiedgetotal}, we obtain the following controller:
\begin{equation}
\dot{\vct x}_i = -\frac{\partial \varphi}{\partial \vct x_i} = \sum_{j\in\cN_i} \vct u_{ij}
\label{eq:consensuscontrolleri}
\end{equation}
Let $w_k = \frac{1}{d_k}$ if $d_k = \|\vct x_{j_k}-\vct x_{i_k}\|$ is not zero and $w_k = 0$ otherwise, for every $e_k=(i_k,j_k)\in\cE^\sigma$. Using variable weights $w_k$ over edges, we define the wighted laplacian matrix as $\breve{\cL}\triangleq \cH\diag(\{w_k\}_{k=1}^m)\cH^\trp$ and $\breve{\mL}\triangleq \breve{\cL}\otimes\mI_d = \mH\diag(\{w_k\mI_d\}_{k=1}^m)\mH^\trp$ where $\mH\triangleq\cH\otimes\mI_d$. Hence, the potential function in \eqref{eq:varphiedgetotal} can be written as:
\begin{equation}
\varphi = \vct x ^\trp \breve{\mL} \vct x
\end{equation}
and controller in \eqref{eq:consensuscontrolleri} is given by:
\begin{equation}
\dot{\vct x} = -\frac{\partial \varphi}{\partial \vct x} = -\breve{\mL} \vct x,
\label{eq:dynamics}
\end{equation}
or also as $\dot{\vct x}=\mH\vct u$. However, $\varphi_{\{i,j\}}$ is not differentiable when $\vct x_i = \vct x_j$. Consequently, $\varphi$ is not differentiable whenever the distance between any pair of agents connected by an edge reaches zero. In such circumstances, we pick the zero vector as a sub-gradient of $\varphi_{\{i,j\}}$ (which is always non-negative), as $\vct u_{ij}$ was defined in \eqref{eq_bearingdef}. This sudden change in magnitude of $\vct u_{ij}$ will make the right hand side of \eqref{eq:dynamics} discontinuous.

Therefore, we resort to solutions in the Filippov sense in terms of differential inclusion \cite{cortes2008discontinuous} and use non-smooth analysis to prove stability. Consider the differential equation with discontinuous right hand side:
\begin{equation}
\dot{\vct x}  = \cX(\vct x)
\label{eq:discontinuous}
\end{equation}
We consider solutions in the form of differential inclusion $\dot{\vct x} \in \cK[\cX](\vct x)$, where $\cK : \real{dn}\rightarrow2^{\real{dn}}$ is a set-valued map evaluated around $\vct x$ excluding any set $\cS$ of measure zero:
\begin{equation}
\cK[\cX](\vct x)  = \bigcap_{\delta>0}\bigcap_{\mu(\cS)=0} \overline{\mathrm{co}}\Big(\cX\big(\mathbb B_{dn}(\vct x,\delta)\setminus\cS\big)\Big).
\end{equation}
where $\mu(.)$ is the Lebesgue measure. This yields $\cX(\vct x)$ if $\cX$ is continuous at $\vct x$ or convexification of the limits of $\cX$ about points where $\cX$ is discontinuous.
Also, for a locally Lipschitz and regular function $f:\real{dn}\rightarrow\real{}$, the \emph{Clarke generalized gradient} is defined as:
\begin{equation}
\mathfrak D f(\vct x)  = \mathrm{co}\Big(\lim_{q\rightarrow+\infty} \frac{\partial}{\partial \vct x} f(\vct x_q) \; | \; \vct x_q \rightarrow \vct x, \vct x_q\notin \Omega_f \Big)
\label{eq:clarke}
\end{equation}
where $\Omega_f$ is the set of points where $f$ is not differentiable, and the \emph{set-valued Lie derivative} of $f$ is given by:
\begin{equation}
\begin{aligned}
\tilde{\mathscr L}_{\cX} f(\vct x) =& \big\{ \ell \in\real{} \; | \; \exists \vct v\in\cK[\cX](\vct x) \text{ s.t. } \\ 
&\quad\bm\zeta^\trp \vct v = \ell, \; \forall \bm\zeta\in\mathfrak D f(\vct x)\big\} \\
\end{aligned}
\label{eq:lie}
\end{equation}

which can possibly be empty. Now, we introduce the LaSalle Invariance Principle for discontinuous systems:
\begin{theorem}[LaSalle Invariance Principle \cite{bacciotti1999stability}]
Let $f:\real{d}\mapsto \real{}$ be a locally Lipschitz and regular function. Let $\vct x_0\in\cS\subset\real{d}$, with $S$ compact and strongly invariant for \eqref{eq:discontinuous}. Assume that either $\max \tilde{\mathscr L}_{\cX} f(\vct x)\leq 0$ or $\tilde{\mathscr L}_{\cX} f(\vct x)=\emptyset$ for all $\vct x\in S$. Let $Z_{\cX,f}=\{\vct x\in\real{d}\; | \; 0\in\tilde{\mathscr L}_{\cX} f(\vct x)\}$. Then, any solution $\vct x:[t_0, +\infty)\mapsto\real{d}$ of \eqref{eq:discontinuous} starting from $\vct x_0$ converges to the largest weakly invariant set $M$ contained in $\overline{Z}_{\cX,f}\cap S$. Moreover, if the set $M$ is an affine collection of points, then the limit of all solutions starting at $\vct x_0$ exists and equals one of them.
\label{thm:lasalle}
\end{theorem}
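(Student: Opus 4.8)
The plan is to follow the architecture of the classical LaSalle argument, replacing the time-derivative of $f$ along trajectories by a set-valued chain rule adapted to Filippov solutions. The decisive ingredient is the following fact: if $\vct x:[t_0,\infty)\to\real{d}$ is a Filippov solution of \eqref{eq:discontinuous}, so that $\dot{\vct x}(t)\in\cK[\cX](\vct x(t))$ for a.e. $t$, then $t\mapsto f(\vct x(t))$ is absolutely continuous and, for a.e. $t$, its derivative exists and satisfies $\frac{\de}{\de t}f(\vct x(t))\in\tilde{\mathscr L}_{\cX}f(\vct x(t))$. I would establish this by first noting that local Lipschitzness of $f$ makes $f\circ\vct x$ absolutely continuous, hence differentiable a.e.; then at each such time, regularity of $f$ permits a Clarke chain rule so that $\frac{\de}{\de t}f(\vct x(t))=\bm\zeta^\trp\dot{\vct x}(t)$ takes a common value over all $\bm\zeta\in\mathfrak D f(\vct x(t))$, and since $\dot{\vct x}(t)\in\cK[\cX](\vct x(t))$ this value is exactly a member of $\tilde{\mathscr L}_{\cX}f(\vct x(t))$ as defined in \eqref{eq:lie}. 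The empty case is then harmless: absolute continuity forces the derivative to exist for a.e. $t$, so $\tilde{\mathscr L}_{\cX}f(\vct x(t))$ is nonempty at a.e. $t$, and the hypothesis $\max\tilde{\mathscr L}_{\cX}f\le 0$ applies there, giving that $f(\vct x(t))$ is non-increasing along every solution.

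Next I would exploit strong invariance of $S$: every solution from $\vct x_0\in S$ stays in $S$ for all $t\ge t_0$. Since $f$ is continuous and $S$ compact, $f$ is bounded below on $S$, so the non-increasing function $f(\vct x(t))$ converges to a finite limit $c$. I would then study the positive limit set $\Omega^+=\bigcap_{T\ge t_0}\overline{\{\vct x(t):t\ge T\}}$. Because the trajectory has precompact range in $S$, $\Omega^+$ is nonempty, compact, and connected (a nested intersection of compact connected sets). Upper semicontinuity of $\cK[\cX]$ together with its nonempty, compact, convex values — the standard regularity of the Filippov regularization — guarantees that $\Omega^+$ is weakly invariant, i.e. through every point of $\Omega^+$ there passes at least one solution remaining in $\Omega^+$. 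Continuity of $f$ and $f(\vct x(t))\to c$ force $f\equiv c$ on $\Omega^+$.

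The final step identifies $\Omega^+$ with the claimed limit set. Fix $\vct p\in\Omega^+$ and a solution $\vct y$ through $\vct p$ contained in $\Omega^+$; along $\vct y$ the function $f$ is constantly $c$, so its a.e. derivative vanishes, and by the chain rule lemma $0\in\tilde{\mathscr L}_{\cX}f(\vct y(t))$ for a.e. $t$, in particular on a sequence of times whose states converge to $\vct p$. Hence $\vct p\in\overline{Z}_{\cX,f}$, giving $\Omega^+\subseteq\overline{Z}_{\cX,f}\cap S$; being weakly invariant, $\Omega^+$ lies in the largest weakly invariant subset $M$ of $\overline{Z}_{\cX,f}\cap S$, and convergence of $\vct x(t)$ to $\Omega^+$ yields convergence to $M$. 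For the closing assertion, if $M$ is a discrete (affinely arranged) set of isolated points, then connectedness of $\Omega^+\subseteq M$ forces $\Omega^+$ to be a singleton, so every solution from $\vct x_0$ possesses a genuine limit equal to one of the points of $M$.

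The main obstacle is the set-valued chain rule of the first paragraph: making rigorous that the a.e. derivative of $f\circ\vct x$ lands in $\tilde{\mathscr L}_{\cX}f$ hinges on the \emph{regularity} of $f$ (so that the Clarke gradient produces a single common directional value under composition) and on a measurable-selection argument tying $\dot{\vct x}(t)$ to $\cK[\cX](\vct x(t))$. The second delicate point is weak invariance of $\Omega^+$, which rests entirely on the upper semicontinuity and compact-convex-valuedness of the Filippov map rather than on any uniqueness of solutions; once these two structural facts are secured, the monotonicity-and-limit-set reasoning parallels the smooth theory.
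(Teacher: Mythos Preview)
The paper does not prove Theorem~\ref{thm:lasalle}; it is quoted from \cite{bacciotti1999stability} and used as a black box. So there is no ``paper's own proof'' to compare against.

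That said, your outline is the standard argument and matches the original proof of Bacciotti--Ceragioli. The three pillars you identify are exactly the right ones: the nonsmooth chain rule $\frac{\de}{\de t}f(\vct x(t))\in\tilde{\mathscr L}_{\cX}f(\vct x(t))$ for a.e.\ $t$ (which is where regularity of $f$ is essential), weak invariance of the $\omega$-limit set via upper semicontinuity and compact-convex-valuedness of $\cK[\cX]$, and the connectedness of $\Omega^+$ to pin down a single limit when $M$ is finite. Your handling of the ``$\tilde{\mathscr L}_{\cX}f=\emptyset$'' clause is also correct: the chain rule forces nonemptiness at a.e.\ time along any actual solution, so the empty case simply cannot occur on a set of positive measure along trajectories. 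One small point: your reading of ``affine collection of points'' as a finite (hence totally disconnected) set is the intended one; the connectedness-of-$\Omega^+$ argument you give is then decisive.
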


\begin{proposition}[Finite-time convergence \cite{cortes2005coordination}]
Under the same assumptions of Theorem \ref{thm:lasalle}, further assume that there exists a neighborhood $U$ of ${Z}_{\cX,f}\cap S$ in $S$ such that $\max \tilde{\mathscr L}_{\cX} f\leq \epsilon < 0$ almost everywhere on $U\setminus{Z}_{\cX,f}\cap S$. Then, any solution $\vct x: [t_0, +\infty)\mapsto\real{d}$ of \eqref{eq:discontinuous} starting at $\vct x_0\in S$ reaches ${Z}_{\cX,f}\cap S$ in finite time. Moreover, if $U=S$, then the convergence time is upper bounded by $\epsilon^{-1}(f(\vct x_0)-\min_{\vct x\in S}f(\vct x))$.
\end{proposition}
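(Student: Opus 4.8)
The plan is to reduce the set-valued hypothesis on $\tilde{\mathscr L}_{\cX} f$ to a scalar differential inequality for $t\mapsto f(\vct x(t))$ along any Filippov solution, and then simply integrate. The essential tool is the nonsmooth chain rule: for a locally Lipschitz regular $f$ and any Filippov solution $\vct x(\cdot)$ of \eqref{eq:discontinuous}, the composition $t\mapsto f(\vct x(t))$ is absolutely continuous and satisfies $\frac{\de}{\de t} f(\vct x(t)) \in \tilde{\mathscr L}_{\cX} f(\vct x(t))$ for almost every $t$. This follows from \cite{cortes2008discontinuous}: at a.e.\ $t$ both $\dot{\vct x}(t)\in\cK[\cX](\vct x(t))$ and $\frac{\de}{\de t} f(\vct x(t))$ exist, and by regularity of $f$ every $\bm\zeta\in\mathfrak D f(\vct x(t))$ yields $\bm\zeta^\trp\dot{\vct x}(t)=\frac{\de}{\de t} f(\vct x(t))$, which is exactly the membership defining \eqref{eq:lie}. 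Consequently, wherever the a.e.\ bound $\max\tilde{\mathscr L}_{\cX} f \leq \epsilon$ holds, the solution obeys $\frac{\de}{\de t} f(\vct x(t)) \leq \epsilon$ a.e.

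I would first dispose of the global case $U=S$, since it contains the whole mechanism and produces the quantitative bound directly. Let $W \triangleq Z_{\cX,f}\cap S$ and let $\tau \triangleq \inf\{t\geq t_0 : \vct x(t)\in W\}$ be the first hitting time, with $\tau=+\infty$ if $W$ is never reached. For every $t<\tau$ the solution lies in $S\setminus W$, where $\max \tilde{\mathscr L}_{\cX} f\leq\epsilon<0$ a.e., so the chain rule gives $\frac{\de}{\de t} f(\vct x(t))\leq\epsilon$ a.e.\ on $[t_0,\tau)$. Integrating, $f(\vct x(t))\leq f(\vct x_0)+\epsilon\,(t-t_0)$ for all $t<\tau$. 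Since $S$ is compact and $f$ continuous, $f$ is bounded below on $S$ by $\min_{\vct x\in S}f(\vct x)$, and combining the two inequalities forces $t-t_0 \leq (f(\vct x_0)-\min_{\vct x\in S}f(\vct x))/|\epsilon|$. Hence $\tau$ is finite and obeys the stated bound (reading $\epsilon^{-1}$ as $1/|\epsilon|$, since $\epsilon<0$).

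For a general neighborhood $U$ the strict decrease is only available inside $U$, so the extra step is to confine the solution to $U$. Here I would invoke Theorem \ref{thm:lasalle}: as $S$ is strongly invariant and $\max\tilde{\mathscr L}_{\cX} f\leq 0$ (or empty) on all of $S$, the solution converges to the largest weakly invariant set $M\subseteq\overline{Z}_{\cX,f}\cap S$ and is therefore drawn into the neighborhood $U$ of $W$, remaining there for all $t\geq t_1$ for some finite $t_1$. From $t_1$ onward the argument of the previous paragraph applies verbatim on $U\setminus W$: were $W$ avoided, one would have $\frac{\de}{\de t} f(\vct x(t))\leq\epsilon<0$ for all $t\geq t_1$, forcing $f(\vct x(t))\to-\infty$ and contradicting the lower bound of $f$ on the compact set $S$. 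Thus $W$ is reached in finite time.

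The main obstacle is the coupling in the general case: establishing rigorously both that the trajectory actually settles inside $U$ (rather than merely accumulating on $\overline{Z}_{\cX,f}\cap S$ through its boundary) and that, once inside, the strict decrease drives it into $W$ instead of permitting escape through the outer boundary of $U$. This is precisely where the convergence-to-$M$ conclusion of Theorem \ref{thm:lasalle} must be reconciled with the monotonicity of $f$ off $W$ — the trajectory cannot leave $U$ because it converges to $M$, and it cannot linger in $U\setminus W$ because $f$ would then be unbounded below; the delicate bookkeeping is whether $M$ lies in the interior of $U$ relative to $S$, given that $U$ is a neighborhood of $Z_{\cX,f}\cap S$ while $M$ is only known to sit in $\overline{Z}_{\cX,f}\cap S$. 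A secondary technical point, which I treat as given via \cite{cortes2008discontinuous}, is the measure-theoretic justification that the a.e.\ bound on $\tilde{\mathscr L}_{\cX} f$ transfers to an a.e.-in-$t$ bound along the absolutely continuous solution.
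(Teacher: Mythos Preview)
The paper does not prove this proposition; it is quoted from \cite{cortes2005coordination} and used as a tool, so there is no in-paper argument to compare against. Your sketch follows the standard route underlying that reference: the nonsmooth chain rule for regular Lipschitz $f$ along Filippov solutions, followed by integration of the resulting scalar inequality. For the case $U=S$ this is complete and yields the stated bound. For the general case you correctly identify the only genuine subtlety --- that Theorem~\ref{thm:lasalle} guarantees attraction to $M\subseteq\overline{Z}_{\cX,f}\cap S$, whereas $U$ is only assumed to be a neighborhood of $Z_{\cX,f}\cap S$ --- and this is indeed where additional care (or a slightly stronger hypothesis, as in the original source) is needed; your honest flagging of this gap is appropriate.
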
 

By setting $\cX$ to be \eqref{eq:dynamics}, we see that due to $\cX$ being bounded and upper semicontinuous with nonempty, compact, and convex values, Filippov solutions of \eqref{eq:dynamics} exists.
The generalized gradient of $\varphi_{\{i,j\}}$ with respect to $\stack(\vct x_i,\vct x_j)$ is given by:
\begin{equation}
\mathfrak D \varphi_{\{i,j\}}  = 
\begin{cases}
\{ \stack(-\vct u_{ij},-\vct u_{ji})\} & d_{ij} \neq 0 \\
\{ \stack(\bm\epsilon_{ij},-\bm\epsilon_{ij})\},\; \bm\epsilon_{ij}\in\bar{\mathbb{B}}_d(\vct 0,1)  & d_{ij} = 0
\end{cases}
\end{equation}

Let $\cN_i^\bullet$ denote neighbors of $i$ whose distance to $i$ is zero. The set-valued map for $\dot{\vct x} = \cX(\vct x)$ is then given by:
\begin{equation}
\cK[\cX](\vct x) = - \mathfrak D \varphi(\vct x) = -\breve{\mL} \vct x \bm\oplus \cI
\end{equation}
where $\bm\oplus$ is the Minkowski sum and $\cI$ is the set given by:
\begin{equation}
\begin{aligned}
\cI = \{ \stack(\bm \epsilon_1,\dots,\bm \epsilon_n) \;|\; \forall i\in\cV, \,&\bm \epsilon_i\in  \bar{\mathbb{B}}_d(\vct 0,|\cN_i^\bullet|),\\
&\bm \epsilon_i+ \sum_{j\in\cN_i^\bullet}\bm \epsilon_j= \vct 0\} 
\end{aligned}
\label{eq:I}
\end{equation}
Let $\bar{\vct x} = \frac{1}{n}\sum_{i\in\cV} \vct x_i$ be the centroid of the formation. We define the disagreement vector for each agent by $\bm\delta_i = \vct x_i-\bar{\vct x}$. This can be written in the aggregate form by $\bm\delta = \mJ\vct x$, where $\mJ = (\mI_n-\frac{1}{n}\vct 1_n \vct 1_n^\trp)\otimes\mI_d$ is the matrix that removes the component of $\vct x$ in the linear subspace $\cJ=\mathrm{span}(\vct 1_n \otimes \mI_d)$. 
Now, we will show that the controller given in \eqref{eq:dynamics} is lower-bounded by the constant $\nu$ defined by:
\begin{equation}
\begin{aligned}
\nu = \min_{\vct x} \; & \|\breve\mL\vct x\| \\
 \text{s.t. } & \|\mJ\vct x\| = 1
\label{eq:minnu}
\end{aligned}
\end{equation}
Intuitively, $\nu$ depends on the topology of the graph, and similar to algebraic connectivity and is greater than zero if the graph is connected.

\begin{lemma}
$\nu >0$ if $\cG$ is connected.
\label{lemma:nu}
\end{lemma}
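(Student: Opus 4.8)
The plan is to bound $\|\breve\mL\vct x\|$ from below by a \emph{continuous}, strictly positive quantity on a compact set, thereby sidestepping the fact that the map $\vct x\mapsto\breve\mL(\vct x)\vct x$ is itself discontinuous (the weights $w_k=1/d_k$ make the bearings, and hence $\breve\mL\vct x$, jump wherever an edge length vanishes). The key device is to replace $\|\breve\mL\vct x\|$ by the potential $\varphi$, which is continuous even though the bearings are not.

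First I would record that $\breve\mL(\vct x)\vct x$ is invariant under translations $\vct x\mapsto\vct x+\vct 1_n\otimes\vct t$: the edge lengths $d_k$, and therefore the weights $w_k$, depend only on the differences $\vct x_i-\vct x_j$, so the weight matrix is unchanged, while $\breve\mL(\vct 1_n\otimes\vct t)=\vct 0$ because $\cH^\trp\vct 1_n=\vct 0$. Writing $\vct y=\mJ\vct x\in\cJ^\perp$ for the centered configuration, this gives $\breve\mL(\vct x)\vct x=\breve\mL(\vct y)\vct y$, so the minimization \eqref{eq:minnu} may be restricted to $\vct y\in\cJ^\perp$ with $\|\vct y\|=1$. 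The feasible set $K=\{\vct y\in\cJ^\perp:\|\vct y\|=1\}$ is compact, which is the essential gain over the original noncompact constraint set (the free consensus direction has been quotiented out).

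Next I would invoke the quadratic identity already available in the excerpt, $\vct y^\trp\breve\mL(\vct y)\vct y=\varphi(\vct y)=\sum_{e_k\in\cE^\sigma}d_k$, valid because $w_k d_k^2=d_k$ on edges of nonzero length while both sides ignore edges of zero length. A single application of Cauchy--Schwarz together with $\|\vct y\|=1$ then yields $\varphi(\vct y)=\vct y^\trp\bigl(\breve\mL(\vct y)\vct y\bigr)\le\|\breve\mL(\vct y)\vct y\|$, whence $\nu\ge\min_{\vct y\in K}\varphi(\vct y)$. Finally I would close using connectivity: $\varphi$ is a finite sum of norms, hence continuous on $\real{nd}$, so by the extreme value theorem it attains its minimum on the compact set $K$, and that minimum is strictly positive, since $\varphi(\vct y)=0$ forces $\vct y_i=\vct y_j$ on every edge, connectivity of $\cG$ then forces all $\vct y_i$ equal (i.e. $\vct y\in\cJ$), and $\vct y\in\cJ^\perp$ leaves only $\vct y=\vct 0$, contradicting $\|\vct y\|=1$. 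Therefore $\nu\ge\min_K\varphi>0$.

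I would flag the discontinuity of $\vct x\mapsto\breve\mL(\vct x)\vct x$ as the main obstacle: a direct appeal to Weierstrass on $\|\breve\mL\vct x\|$ is not licensed, and one must simultaneously dispose of the noncompactness coming from the consensus direction. Both difficulties dissolve at once through the Cauchy--Schwarz reduction to the continuous potential $\varphi$ on the unit sphere of the disagreement subspace; the remaining ingredients---the identity $\vct y^\trp\breve\mL\vct y=\varphi$ and the vanishing of $\varphi$ only at consensus---are routine once connectivity is in hand.
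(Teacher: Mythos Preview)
Your proof is correct and takes a genuinely different route from the paper. Both arguments begin by passing to the compact set $K=\{\vct y\in\cJ^\perp:\|\vct y\|=1\}$, but from there the paper argues directly that $\|\breve\mL\vct y\|$ cannot vanish on $K$: if no adjacent nodes coincide, $\breve\cL$ has rank $n-1$ with null space spanned by $\vct 1_n$, forcing $\vct y=\vct 0$; if some do coincide, the paper removes the zero-weight edges, partitions the remaining graph into connected components, argues each component's Laplacian forces its nodes to coincide, and then uses the deleted edges to glue the components back into a single coincident configuration. Your Cauchy--Schwarz reduction $\|\breve\mL\vct y\|\ge\vct y^\trp\breve\mL\vct y=\varphi(\vct y)$ replaces this entire case split by the single observation that the continuous function $\varphi$ vanishes only on $\cJ$. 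Besides being shorter, your approach is more careful on one point the paper glosses over: since $\vct y\mapsto\breve\mL(\vct y)\vct y$ is discontinuous at coincidences, compactness of $K$ alone does not immediately license the claim that the infimum in \eqref{eq:minnu} is attained, whereas the extreme value theorem applies to $\varphi$ without qualification, and you only need the inequality $\nu\ge\min_K\varphi$. What the paper's argument buys is a sharper intermediate statement---an explicit description of where $\breve\mL\vct y=\vct 0$---but that is not needed for the lemma itself.
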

\begin{proof}
Notice that \eqref{eq:minnu} can be rewritten as:
\begin{equation*}
\begin{aligned}
\nu = \min_{\vct y}\;  \|&\breve\mL\vct y\| \\
 \text{s.t. }   \|&\vct y\| = 1 \\
 		&  \vct y \in \cJ^\perp 
\end{aligned}
\end{equation*}
Since $\vct y$ belongs to the intersection of a sphere with a linear subspace, which is compact, the minimum exists. Furthermore, $\|\breve\mL\vct y\|$ is non-negative and therefore $\nu\geq 0$. We will show that $\nu\neq 0$ for connected graphs by contradiction. If $\nu$ is zero and $d_{ij}\neq 0$ for all edges in $\cE$, then $\breve\cL$ is of rank $n-1$ and $\vct y \in \mathrm{null}(\breve\mL)=\mathrm{span}(\vct 1_n\otimes\mI_d) = \cJ$. Since we assumed $\vct y \in \cJ^\perp$, then $\vct y=\vct 0$, which violates $\|\vct y\|=1$. If there are coincident adjacent agents, given the definition of a bearing vector in \eqref{eq_bearingdef}, the corresponding weight of edges connecting them is zero as if those edges were absent. Hence, the non-zero edges can be partitioned into $\kappa$ connected components ($\kappa\geq 1$) with weighted laplacians $\{\breve\mL_k\}_{k=1}^{\kappa}$ such that $\breve\mL = \diag(\breve\mL_k)$ after some permutation over nodes. Since each component is connected, $\breve\mL_k \vct y_k$ equals zero if and only if all nodes in component $k$ are coincident, where $\vct x_k$ denotes the coordinates of nodes from component $k$. Hence, $\breve\mL \vct y$ is zero if and only if all nodes of each component are coincident. Given that the nodes connected by zero-weight edges are also coincident, and these edges connect these components to form a connected graph, all the nodes need to be coincident, violating the $\|\vct y\| = 1$ condition.
\end{proof}
For the next step, we will show finite-time stability of \eqref{eq:dynamics}.
\begin{theorem}
$\max\tilde{\mathscr L}_{\cX} \varphi(\vct x)= -\|\breve\mL\vct x\|^2  \leq -\nu^2 $ 
\label{th:finite}
\end{theorem}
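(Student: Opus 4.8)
The plan is to use the gradient-flow structure of the dynamics, namely that the Filippov map and the Clarke gradient are negatives of each other, $\cK[\cX](\vct x) = -\mathfrak D\varphi(\vct x) = -(\breve\mL\vct x\,\bm\oplus\,\cI)$, and then to read off the set-valued Lie derivative from this. The first step is a self-duality observation that bounds every element of $\tilde{\mathscr L}_\cX\varphi(\vct x)$: if $\ell\in\tilde{\mathscr L}_\cX\varphi(\vct x)$, then by \eqref{eq:lie} there is a $\vct v\in\cK[\cX](\vct x)$ with $\bm\zeta^\trp\vct v=\ell$ for all $\bm\zeta\in\mathfrak D\varphi(\vct x)$; since $-\vct v\in\mathfrak D\varphi(\vct x)$, taking $\bm\zeta=-\vct v$ gives $\ell=-\|\vct v\|^2\le 0$. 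This already supplies the hypothesis $\max\tilde{\mathscr L}_\cX\varphi\le 0$ (or emptiness) required by Theorem \ref{thm:lasalle}.

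Second, I would evaluate the Lie derivative at the generic configurations, those in which no two adjacent agents coincide. There every $d_{ij}\neq 0$, so $\cI=\{\vct 0\}$, the function $\varphi$ is differentiable, and $\mathfrak D\varphi(\vct x)=\{\breve\mL\vct x\}$ while $\cK[\cX](\vct x)=\{-\breve\mL\vct x\}$ are singletons. The set-valued Lie derivative then reduces to the ordinary time-derivative $\dot\varphi=(\breve\mL\vct x)^\trp(-\breve\mL\vct x)=-\|\breve\mL\vct x\|^2$, which is exactly the claimed identity. Since the configurations possessing a coincident adjacent pair form a set of Lebesgue measure zero, this value holds almost everywhere; and on that null set the first step still yields $\max\tilde{\mathscr L}_\cX\varphi\le 0$ (explicitly, writing $\vct v=-(\breve\mL\vct x+\vct a)$ with $\vct a\in\cI$ and demanding that $\bm\zeta^\trp\vct v$ be constant over $\bm\zeta\in\breve\mL\vct x\,\bm\oplus\,\cI$ forces $\breve\mL\vct x+\vct a\perp\mathrm{span}(\cI)$, whence $\ell=-\|\mathrm{proj}_{\mathrm{span}(\cI)^\perp}\breve\mL\vct x\|^2\le 0$). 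Thus neither the LaSalle hypothesis nor the almost-everywhere decrease rate needed by the finite-time convergence proposition is compromised at these degenerate points.

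Finally, for the bound $-\|\breve\mL\vct x\|^2\le-\nu^2$, I would use that $\breve\mL$ annihilates the consensus subspace $\cJ=\mathrm{span}(\vct 1_n\otimes\mI_d)$, so $\breve\mL\vct x=\breve\mL\mJ\vct x$ with $\mJ\vct x\in\cJ^\perp$. Writing $\mJ\vct x=\|\mJ\vct x\|\,\hat{\vct y}$ for a unit $\hat{\vct y}\in\cJ^\perp$ and invoking the definition \eqref{eq:minnu} of $\nu$ together with Lemma \ref{lemma:nu}, homogeneity gives $\|\breve\mL\vct x\|=\|\mJ\vct x\|\,\|\breve\mL\hat{\vct y}\|\ge\nu\|\mJ\vct x\|$, so that $-\|\breve\mL\vct x\|^2\le-\nu^2\|\mJ\vct x\|^2$; this specializes to the stated $\le-\nu^2$ in the relevant regime $\|\mJ\vct x\|\ge 1$ away from consensus.

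The hard part will be the second step: the set-valued Lie derivative has to be computed at the non-differentiable, coincident-agent configurations through the full structure of $\cI$ rather than a single gradient, and one must verify that these measure-zero degeneracies neither falsify the LaSalle hypothesis nor degrade the uniform negative decrease rate that the subsequent finite-time argument relies on almost everywhere.
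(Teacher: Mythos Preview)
Your first step (the self-duality observation $\ell=-\|\vct v\|^2\le 0$) and your generic-point computation are correct and match the paper. There are, however, two points where your argument diverges from what the theorem actually asserts.

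\textbf{The degenerate-point case.} The theorem claims the \emph{equality} $\max\tilde{\mathscr L}_\cX\varphi(\vct x)=-\|\breve\mL\vct x\|^2$ at every $\vct x$, not merely almost everywhere. Your orthogonality argument yields only $\ell=-\|\mathrm{proj}_{\mathrm{span}(\cI)^\perp}\breve\mL\vct x\|^2$, which can in principle be strictly larger than $-\|\breve\mL\vct x\|^2$. The paper closes this gap differently: fixing $\vct v=-\breve\mL\vct x+\bm\alpha$ and testing the two generalized gradients $\breve\mL\vct x\pm\bm\alpha$ yields the two expressions $-\|\breve\mL\vct x+\bm\alpha\|^2$ and $-\|\breve\mL\vct x\|^2+\|\bm\alpha\|^2$; equating them forces $\|\bm\alpha\|^2+\bm\alpha^\trp\breve\mL\vct x=0$, and then a support argument (the nonzero blocks of $\bm\alpha\in\cI$ live on agents with coincident neighbors, whereas those edges carry zero weight in $\breve\mL$) rules out $\bm\alpha\neq\vct 0$, giving the exact value $-\|\breve\mL\vct x\|^2$. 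Your ``almost everywhere'' weakening would still feed the finite-time proposition, but it does not establish the theorem as written.

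\textbf{The bound $\le-\nu^2$.} Here there is a genuine error in your homogeneity step. The weighted Laplacian $\breve\mL=\breve\mL(\vct x)$ depends on $\vct x$ through the inverse distances $w_k=1/d_k$, so $\breve\mL(\beta\vct x)=\beta^{-1}\breve\mL(\vct x)$ and the map $\vct x\mapsto\breve\mL(\vct x)\vct x$ is \emph{zero}-homogeneous, not one-homogeneous. Consequently $\|\breve\mL(\vct x)\vct x\|=\|\breve\mL(\hat{\vct y})\hat{\vct y}\|$ where $\hat{\vct y}=\mJ\vct x/\|\mJ\vct x\|$, and the definition of $\nu$ gives $\|\breve\mL(\vct x)\vct x\|\ge\nu$ for \emph{every} nonconsensus $\vct x$, with no factor $\|\mJ\vct x\|$ and no need for the ad hoc restriction $\|\mJ\vct x\|\ge 1$. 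Your claimed identity $\|\breve\mL\vct x\|=\|\mJ\vct x\|\,\|\breve\mL\hat{\vct y}\|$ is false once $\breve\mL$ is evaluated consistently, and the resulting inequality $\ge\nu\|\mJ\vct x\|$ would be too weak near consensus. This is exactly the scale invariance the paper invokes in its last sentence.
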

\begin{proof}
By definition, we have that $\mathfrak D \varphi(\vct x) = \breve{\mL} \vct x \bm\oplus \cI$ and $\cK[\cX](\vct x) = -\breve{\mL} \vct x \bm\oplus \cI$. Based on \eqref{eq:lie}, we will show the intersection of inner products of members of $\mathfrak D \varphi(\vct x)$ with $\cK[\cX](\vct x)$ is either empty or equals $-\|\breve\mL\vct x\|^2$. If none of the nodes are intersecting, $\cI$ is empty and we have $\tilde{\mathscr L}_{\cX} \varphi(\vct x)=-\|\breve\mL\vct x\|^2$. If $\cI$ is not empty, suppose exists $\bm\alpha\in\cI$ and $\ell\in\tilde{\mathscr L}_{\cX} \varphi(\vct x)$ such that:
\begin{equation*}
\bigcap_{\bm\beta\in\cI }(\breve{\mL} \vct x+\bm\alpha)^\trp(-\breve{\mL} \vct x+\bm\beta) = \ell
\end{equation*}
Since for every $\bm\beta\in\cI$, $-\bm\beta$ is also in $\cI$, then by picking the values $-\bm\alpha$ and $\bm\alpha$ for $\bm\beta$ we get $\ell=-\|\breve{\mL} \vct x+\bm\alpha\|^2$ and $\ell=-\|\breve{\mL} \vct x\|^2+\|\bm\alpha\|^2$. By setting these two terms equal and simplifying them, we have $\|\bm\alpha\|^2 + \bm\alpha^\trp\breve{\mL} \vct x=\vct 0$.

This is true only if $\bm\alpha=\vct 0$, which means $\ell=-\|\breve\mL\vct x\|^2$, or if $\bm\alpha=-\breve{\mL} \vct x$. This cannot happen since $\bm\alpha\in\cI$, its non-zero entries only correspond to agents that are intersecting and the non-zero entries of $\breve{\mL} \vct x$ correspond to agents that are not intersecting. 
Furthermore, since $\|\breve\mL(\vct x)\vct x\| = \|\breve\mL(\beta\vct x)\beta\vct x\|$ for any $\beta>0$ the magnitude of $\breve\mL\vct x$ does not change with scale and the inequality $\|\breve\mL(\vct x)\vct x\|\geq\nu$ from Lemma \ref{lemma:nu} also stands for $\|\breve\mL(\beta\vct x)\beta\vct x\|$. Hence, the proof is complete.
\end{proof}

As was shown in Theorem \ref{th:finite}, the set-valued Lie-derivative of $\varphi(\vct x)$ is upper bounded by a negative constant, which indicates that the convergence happens in \emph{finite-time}, with $t_{\text{reach}}\leq \frac{\varphi\big(\vct x(t=0)\big)}{\nu^2}$.

\begin{lemma}
The centroid of a formation under controller \eqref{eq:consensuscontrolleri} is invariant.
\label{lemma-centroid}
\end{lemma}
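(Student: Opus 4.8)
The plan is to show that the centroid velocity vanishes identically along every Filippov solution, from which invariance is immediate. Writing the centroid in aggregate form as $\bar{\vct x} = \tfrac{1}{n}(\vct 1_n^\trp\otimes\mI_d)\vct x$, it suffices to prove that $(\vct 1_n^\trp\otimes\mI_d)\vct v = \vct 0$ for every $\vct v\in\cK[\cX](\vct x) = -\breve{\mL}\vct x \bm\oplus \cI$; equivalently, that $\cK[\cX](\vct x)\subseteq\cJ^\perp$, so the dynamics carry no component along the centroid subspace $\cJ = \mathrm{span}(\vct 1_n\otimes\mI_d)$. Granting this, $\dot{\bar{\vct x}} = \tfrac{1}{n}(\vct 1_n^\trp\otimes\mI_d)\dot{\vct x} = \vct 0$ for almost every $t$, and since $\bar{\vct x}$ is absolutely continuous along a Filippov solution it is therefore constant.

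First I would dispatch the smooth drift term. Using $\breve{\mL} = \mH\diag(\{w_k\mI_d\}_{k=1}^m)\mH^\trp$ with $\mH = \cH\otimes\mI_d$, and the elementary fact that each column of the oriented incidence matrix $\cH$ contains exactly one $+1$ and one $-1$, we have $\vct 1_n^\trp\cH = \vct 0^\trp$, hence $(\vct 1_n^\trp\otimes\mI_d)\mH = (\vct 1_n^\trp\cH)\otimes\mI_d = \vct 0$ and therefore $(\vct 1_n^\trp\otimes\mI_d)\breve{\mL}\vct x = \vct 0$ for every configuration. At the level of the per-agent controller $\dot{\vct x}_i = \sum_{j\in\cN_i}\vct u_{ij}$ this is just the observation that summing over all agents pairs each term $\vct u_{ij}$ with its opposite $\vct u_{ji} = -\vct u_{ij}$, so the total cancels edge by edge.

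Next I would treat the discontinuous contribution $\cI$. Every element $\bm\epsilon = \stack(\bm\epsilon_1,\dots,\bm\epsilon_n)\in\cI$ is assembled from subgradients $\bm\epsilon_{ij}\in\bar{\mathbb{B}}_d(\vct 0,1)$ inherited from $\mathfrak D\varphi_{\{i,j\}}$, which retain the antisymmetry $\bm\epsilon_{ij} = -\bm\epsilon_{ji}$ of the bearing vectors they replace. Summing over all agents then gives $\sum_{i}\bm\epsilon_i = \sum_i\sum_{j\in\cN_i^\bullet}\bm\epsilon_{ij} = \sum_{\{i,j\}}(\bm\epsilon_{ij}+\bm\epsilon_{ji}) = \vct 0$, i.e. $(\vct 1_n^\trp\otimes\mI_d)\bm\epsilon = \vct 0$. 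Combining the two parts, $\vct 1_n^\trp\otimes\mI_d$ annihilates every $\vct v\in\cK[\cX](\vct x)$, which is exactly the claim.

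I expect the only delicate point, rather than a genuine obstacle, to be the bookkeeping for the non-smooth term: one must verify that the pairwise antisymmetry of the bearing subgradients is preserved under the convexification defining $\cK[\cX]$, so that $\cI$ genuinely lies in $\ker(\vct 1_n^\trp\otimes\mI_d)$. Everything else reduces to the standard incidence identity $\vct 1_n^\trp\cH = \vct 0^\trp$ and the cancellation of opposite edges.
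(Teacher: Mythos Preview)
Your proof is correct and follows essentially the same strategy as the paper's: both argue that every element $\bm\chi\in\cK[\cX](\vct x)=-\breve{\mL}\vct x\bm\oplus\cI$ satisfies $\sum_i\bm\chi_i=\vct 0$, and conclude that the centroid derivative vanishes along Filippov solutions. Your version is more explicit, separately justifying the claim for the drift via the incidence identity $\vct 1_n^\trp\cH=\vct 0^\trp$ and for $\cI$ via the pairwise antisymmetry of the edge subgradients, whereas the paper simply asserts the sum-zero property of $\cK[\cX](\vct x)$ and reads off $\mathfrak D\Xi_\kappa=0$ coordinate-wise.
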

\begin{proof}
Let $\Xi_\kappa = \frac{1}{n}\sum_{i=1}^{n} \vct x_i^{(\kappa)} $ be the average of coordinates of all agents along dimension $\kappa\leq d$. Since $\cK[\cX](\vct x) = -\breve{\mL} \vct x \bm\oplus \cI$, for any $\bm\chi \in \cK[\cX](\vct x)$ we have that $\sum_{i=1}^{n} \bm\chi_i = \vct 0$. Therefore, $\mathfrak D\Xi_\kappa = \bigcup_{\bm\chi\in\cK[\cX](\vct x)}\sum_{i=1}^{n} \bm\chi_i^{(\kappa)}=0$ for any $\kappa\leq d$ and the proof is complete.
\end{proof}

From lemma \ref{lemma-centroid} we can see that the agents converge to the average value of their initial positions and this centroid is invariant along time.

\subsection{Directed graphs}
In the previous section, we investigated consensus for undirected graphs. In practice, however, agents may not sense the bearing vectors of their neighbors in a bidirectional manner or through communication. As we will show in this section, having bidirectional sensing information is not necessary. We model these interactions with a directed \emph{sensing graph} $\cG$, where $(i,j)\in\cE$ means that $i$ can measure $\vct u_{ij}$. 
As we showed earlier, for an undirected graph it suffices for the graph to be connected in order to reach consensus. In this section, we investigate the controller given in \eqref{eq:consensuscontrolleri} but for the directed graph $\cG$, which is:

\begin{equation}
\dot{\vct x}_i = \sum_{j\in\cN_i^+} \vct u_{ij}
\label{eq:consensuscontrolleridirected}
\end{equation}

or as $\dot{\vct x}=\mH_+\vct u$. We will show that it suffices for $\cG$ to have a globally reachable node, or equivalently, the complement of $\cG$ to have a directed spanning tree in order to reach consensus. 

\begin{assumption}
The directed graph $\cG$ has a globally reachable node.
\label{as:directed}
\end{assumption}

The intuition behind \eqref{eq:consensuscontrolleridirected} is that each agent $i$ has a private convex objective function $\varphi_i(\vct x)=\sum_{j\in\cN_i^+}d_{ij}$ which tries to minimize by moving in the direction of $-\frac{\partial \varphi_i}{\partial \vct x_i}$. The minimizer of $\varphi_i$ with respect to $\vct x_i$ is unique if $\{\vct x_j\}_{j\in\cN_i^+}$ are not collinear and is called the geometric median or Fermat point \cite{minsker2015geometric}. The geometric median is always inside the convex hull of neighbors of $i$ and hence $i$ reaches consensus with its neighbors if they all converge to the same point.

Assumption \ref{as:directed} ensures that all nodes converge to the same point determined by the globally reachable node or nodes. The globally reachable node can be unique, which is referred to as \emph{leader}, or belongs to a strongly connected component of the graph in which case all the nodes in the strongly connected component are reachable by other nodes of the graph. Leader is stationary since it has no neighbors and all other nodes converge to it. If there is more than one globally reachable node, the convergence point of the strongly connected component composed of globally reachable nodes determines the final convergence point.

In the linear consensus problem with controller $\dot{\vct x}_i = \sum_{j\in\cN_i^+} \vct x_j-\vct x_i$, the same assumption is sufficient for consensus \cite{wu2005synchronization}. Instead of sensing graphs, the convention is to use communication graphs where edges show the direction of information flow and are essentially the the reverted version of the sensing graphs by definition. For a communication graph, the assumption \ref{as:directed} is equivalent to $\bar{\cG}$ having a directed spanning tree.

First we show that the equilibrium points of \eqref{eq:consensuscontrolleridirected} are in $\cJ$. Later, we introduce the maximum distance between any pair of nodes as a Lyapunov function for \eqref{eq:consensuscontrolleridirected} and prove stability.

\begin{lemma}
Under assumption \ref{as:directed}, $\dot{\vct x} = \vct 0$ if and only if consensus is reached.
\end{lemma}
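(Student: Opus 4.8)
The plan is to prove the two implications separately, reading $\dot{\vct x}=\vct 0$ as the equilibrium condition for the right-hand side of \eqref{eq:consensuscontrolleridirected}: for every agent $i$, $\sum_{j\in\cN_i^+}\vct u_{ij}=\vct 0$, with the convention from \eqref{eq_bearingdef} that $\vct u_{ij}=\vct 0$ whenever $\vct x_j=\vct x_i$. The forward direction (consensus $\Rightarrow \dot{\vct x}=\vct 0$) is immediate: at a consensus configuration every $d_{ij}$ vanishes, so each bearing is the zero vector and every agent's velocity is zero. The content is the converse. I would first reformulate the per-agent equilibrium geometrically: $\sum_{j\in\cN_i^+}\vct u_{ij}=\vct 0$ is precisely the first-order optimality condition for $\vct x_i$ to minimize the convex map $\vct y\mapsto\sum_{j\in\cN_i^+}\norm{\vct x_j-\vct y}$, so at equilibrium each $\vct x_i$ is the geometric median (Fermat point) of its non-coincident out-neighbors and, in particular, lies in their convex hull.

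The core is a ``top face is closed'' lemma. Fix a direction $\vct n$ lying outside the finite union of hyperplanes $\{\vct n:\vct n\trp(\vct x_i-\vct x_j)=\vct 0\}$ taken over all pairs with $\vct x_i\neq\vct x_j$; almost every $\vct n$ qualifies. Let $M=\max_i\vct n\trp\vct x_i$ and $A(\vct n)=\{i:\vct n\trp\vct x_i=M\}$; by the choice of $\vct n$ all agents in $A(\vct n)$ sit at a single point $\vct q$. For $i\in A(\vct n)$, any non-coincident out-neighbor $j$ has $\vct n\trp\vct x_j<M$, hence $\vct n\trp\vct u_{ij}<0$, while coincident out-neighbors contribute exactly $\vct 0$. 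Dotting the equilibrium $\sum_{j\in\cN_i^+}\vct u_{ij}=\vct 0$ with $\vct n$ then forces the set of non-coincident out-neighbors to be empty. Thus every out-neighbor of $i$ coincides with $\vct x_i=\vct q$, i.e. $\cN_i^+\subseteq A(\vct n)$: no edge leaves the top face.

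Finally I would invoke Assumption \ref{as:directed}. Let $\ell$ be a globally reachable node, so there is a directed path from every vertex to $\ell$. Pick any $i\in A(\vct n)$; a path $i\rightsquigarrow\ell$ cannot leave $A(\vct n)$ because the face is closed under out-edges, so $\ell\in A(\vct n)$. Hence $\vct n\trp\vct x_\ell=\max_i\vct n\trp\vct x_i$ for almost every $\vct n$, and by continuity of both sides in $\vct n$ this holds for all $\vct n$. Therefore $\vct n\trp(\vct x_\ell-\vct x_i)\geq 0$ for every $i$ and every $\vct n$; applying this to $\vct n$ and $-\vct n$ gives $\vct x_\ell=\vct x_i$ for all $i$, which is consensus.

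I expect the main obstacle to be the correct handling of coincident agents. The statement is an exact equivalence precisely because the controller zeroes the bearing on coincident pairs, so such neighbors contribute nothing to $\dot{\vct x}_i$; if one instead used the full Filippov inclusion $\vct 0\in\cK[\cX](\vct x)$, the freely chosen unit-ball correction vectors attached to coincident edges could balance the pull of distant neighbors and keep clustered, non-consensus configurations at rest, breaking the lemma. Closing the top-face argument therefore hinges on the literal right-hand side together with the genericity of $\vct n$, which is what turns ``non-coincident'' into ``strictly below the face''; the reachability step is then short. I would also remark that strong connectivity of the globally reachable component is not needed here---reachability of a single node $\ell$ already pins the entire convex hull of the agents to one point.
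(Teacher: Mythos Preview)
Your proof is correct and takes a genuinely different route from the paper. The paper argues algebraically: when no neighbors coincide, $\dot{\vct x}=-\breve{\mL}_+\vct x$ with $\breve{\mL}_+$ the weighted directed Laplacian, and one cites the standard rank result (e.g.\ Wu, 2005) that under Assumption~\ref{as:directed} this matrix has null space exactly $\cJ$; coincidences are then handled by collapsing each cluster of coincident-and-connected agents into a single super-node, observing that the quotient graph inherits a globally reachable node and has no coincident agents, and pushing the contradiction back to the original system. Your argument instead is purely geometric: a generic supporting hyperplane isolates a single extreme point $\vct q$, the equilibrium condition forces the out-neighborhood of every agent at $\vct q$ to stay at $\vct q$, and global reachability then traps $\ell$ on every such face, pinning the configuration to a point.

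Your approach is more elementary and self-contained---it avoids the spectral citation and the somewhat informal quotient construction in the paper (where, for instance, one must still check that $\dot{\vct x}'_i=\sum_{q\in\cQ_i}\dot{\vct x}_q$ really coincides with the reduced controller and that reachability survives the collapse). The paper's route, on the other hand, stays within the weighted-Laplacian language used throughout the rest of the article and makes the analogy with linear consensus explicit. Your closing remark that the lemma is about the literal right-hand side and would fail for the differential inclusion $\vct 0\in\cK[\cX](\vct x)$ is a useful clarification that the paper leaves implicit.
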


\begin{proof}
If no two neighboring agents are colliding at an instance, all edge weights are positive ($w_k>0$) and $\dot{\vct x} = \breve\mL_+ \vct x$ where $\breve\mL_+ = \cL_+\otimes \mI_d$ and $\cL_+$ is the weighted Laplacian of a graph with a globally reachable node. $\cL_+$ has rank $n-1$ \cite[Lemma 2]{wu2005rayleigh} with $\vct 1_n$ being the eigenvector corresponding to the single zero eigenvalue while other eigenvalues are positive. Therefore, $\mathrm{null}(\breve\mL_+)=\cJ$  and $\dot{\vct x}$ is zero whenever $\vct x\in\cJ$ which means that agents are in consensus. If there are some coincident neighbors in formation $\cF=(\cG, \vct x)$, say $\vct x_p=\vct x_q$ for $q\in\cN_p^\bullet$,  since the weight of edges connecting coinciding agents is zero we can assume those edges (i.e. $(p,q)$) are removed. We group such nodes $p$ and all $q\in\cN_p^\bullet$ and all $r\in\cN_q^\bullet$ and so on recursively into sets $\{\cQ_i\}_{i=1}^{n^\prime}$ with $n^\prime<n$. We introduce a new formation $\cF^\prime=(\cG^\prime, \vct x^\prime)$ with $n^\prime$ vertices where node $i$ is connected to $j$ in $\cG^\prime$ if exists at least a  vertex in $\cQ_i$ connected to a vertex in $\cQ_j$ in $\cG$. Since connectivity is maintained in this transformation, $\cG^\prime$ also has a globally reachable node. We set $\vct x_i^\prime=\vct x_q$ for any $q\in\cQ_i$ and $\dot{\vct x}_i^\prime=\sum_{q\in\cQ_i} \dot{\vct x}_q$. Since nodes of $\cG^\prime$ are not coincident, $\dot{\vct x}^\prime\neq \vct 0$ which yields $\dot{\vct x}\neq \vct 0$.
\end{proof}

Now, we will show global stability of controller \eqref{eq:consensuscontrolleridirected}. 

\begin{theorem}
Controller \eqref{eq:consensuscontrolleridirected} achieves consensus under assumption \ref{as:directed}.
\label{thm:directedstability}
\end{theorem}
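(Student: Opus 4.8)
The plan is to apply the non-smooth LaSalle Invariance Principle (Theorem~\ref{thm:lasalle}) with the translation-invariant Lyapunov function
\begin{equation*}
V(\vct x) = \max_{i,j\in\cV}\|\vct x_i - \vct x_j\|,
\end{equation*}
the diameter of the point cloud, which is locally Lipschitz and regular (a max of convex functions), satisfies $V\geq 0$, and vanishes exactly on the consensus set $\cJ$. First I would establish the hypotheses of Theorem~\ref{thm:lasalle}: the right-hand side $\cX$ of \eqref{eq:consensuscontrolleridirected} is bounded ($\|\dot{\vct x}_i\|\leq|\cN_i^+|$) and upper semicontinuous with nonempty compact convex values, so Filippov solutions exist; and since $V$ will be shown non-increasing, all relative positions stay bounded, so after fixing the translational degree of freedom (e.g.\ centering at a globally reachable node, which is stationary when it is a sink, cf.\ the preceding lemma) the trajectory is confined to a compact strongly invariant set $S$.

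The core of the argument is a geometric estimate on extreme nodes. Let $(p,q)$ realize the diameter and set $\vct e=(\vct x_p-\vct x_q)/V$. Expanding $\|\vct x_j-\vct x_q\|^2\leq V^2$ and $\|\vct x_j-\vct x_p\|^2\leq V^2$ yields $\vct e^\trp\vct x_q\leq\vct e^\trp\vct x_j\leq\vct e^\trp\vct x_p$ for every $j\in\cV$, so $\vct x_p$ and $\vct x_q$ are extreme in the directions $+\vct e$ and $-\vct e$. Hence every bearing satisfies $\vct e^\trp\vct u_{pj}=\vct e^\trp(\vct x_j-\vct x_p)/d_{pj}\leq 0$ and $\vct e^\trp\vct u_{qj}\geq 0$, giving $\tfrac{d}{dt}\|\vct x_p-\vct x_q\|=\vct e^\trp\dot{\vct x}_p-\vct e^\trp\dot{\vct x}_q\leq 0$. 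I would then identify the Clarke generalized gradient $\mathfrak D V$ as the convex hull of the vectors $\stack(\dots,\vct e,\dots,-\vct e,\dots)$ over the diameter-realizing pairs, and evaluate the set-valued Lie derivative $\tilde{\mathscr L}_{\cX}V$ along $\cK[\cX]$ (the Minkowski term accounting for coincident agents contributes nothing in these extreme directions). The estimate above then gives $\max\tilde{\mathscr L}_{\cX}V(\vct x)\leq 0$ on $S$, so Theorem~\ref{thm:lasalle} guarantees convergence to the largest weakly invariant set $M\subseteq\overline{Z}_{\cX,V}\cap S$.

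It remains to show $M\subseteq\cJ$, which I expect to be the main obstacle. On $M$ the diameter is frozen at some $D^\ast\geq 0$; along a trajectory contained in $M$, at each instant some pair $(p,q)$ attains $D^\ast$ with $\tfrac{d}{dt}\|\vct x_p-\vct x_q\|=0$, which by the sign analysis forces $\vct e^\trp\dot{\vct x}_p=\vct e^\trp\dot{\vct x}_q=0$ and hence every out-neighbor of $p$ (resp.\ $q$) to lie on the leading (resp.\ trailing) supporting hyperplane $\{\vct y:\vct e^\trp\vct y=\vct e^\trp\vct x_p\}$. I would then leverage Assumption~\ref{as:directed}: the globally reachable nodes form the unique sink strongly connected component $\cR$, whose members only sense one another, so $\cR$ runs an autonomous copy of \eqref{eq:consensuscontrolleridirected} on a strongly connected graph. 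Applying the same extreme-node estimate inside $\cR$, strong connectivity propagates the ``out-neighbors on the extreme face'' condition along directed paths and forces $\cR$ to collapse to a single point; since every remaining node has a directed path into $\cR$, the supporting-hyperplane argument propagates the contraction outward along these paths, leaving $D^\ast=0$ as the only possibility. Thus $M\subseteq\cJ$, and because by the preceding lemma the elements of $M$ are equilibria forming the affine set $\cJ$, the last clause of Theorem~\ref{thm:lasalle} gives existence of the limit, i.e.\ asymptotic consensus. The genuinely delicate steps are making this face-propagation rigorous under a \emph{directed} topology with possibly coincident agents, and handling the case where the globally reachable nodes constitute a nontrivial strongly connected component rather than a single stationary leader.
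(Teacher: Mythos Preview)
Your Lyapunov function $V(\vct x)=\max_{i,j}\|\vct x_i-\vct x_j\|$ is exactly what the paper uses, and your computation of $\mathfrak D V$ as the convex hull of the pair vectors is correct. The divergence is that you extract only the non-strict bound $\vct e^\trp\vct u_{pj}\le 0$, and then propose to recover consensus by a face-propagation analysis of the LaSalle invariant set, which you rightly flag as the hard part. The paper avoids this entirely by observing that the inequality is in fact \emph{strict} whenever $\vct x_j\neq\vct x_p$: your own expansion shows $\vct e^\trp\vct x_j\le\vct e^\trp\vct x_p$ comes from $\vct e^\trp(\vct x_j-\vct x_q)\le\|\vct x_j-\vct x_q\|\le V$, and both inequalities are tight only when $\vct x_j-\vct x_q=V\vct e$, i.e.\ $\vct x_j=\vct x_p$. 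Geometrically, the hyperplane $\{\vct y:\vct e^\trp\vct y=\vct e^\trp\vct x_p\}$ is tangent to $\bar{\mathbb B}_d(\vct x_q,V)$ at the single point $\vct x_p$, so no other node can sit on it.

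With strictness in hand, the paper argues directly that $\tilde{\mathscr L}_{\cX}V<0$ (or is empty) off the consensus set: for a diametral pair $(p,q)$, Assumption~\ref{as:directed} guarantees that at most one of them is a leader with $\cN^+=\emptyset$, so at least one, say $p$, has an out-neighbor $k$; if $\vct x_k\neq\vct x_p$ then $\vct e^\trp\vct u_{pk}<0$ strictly and hence $\vct e^\trp\dot{\vct x}_p<0$, giving $\dot V<0$. Coincident out-neighbors are absorbed into the Filippov set-valued map just as in your sketch, and when several pairs realize the diameter the set-valued Lie derivative becomes an intersection of strictly negative singletons, hence negative or empty. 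Thus $Z_{\cX,V}\cap S=\cJ\cap S$ already, and your entire ``face-propagation along directed paths into the sink SCC'' program is unnecessary. Your route is not wrong in principle, but the step you call the main obstacle dissolves once you upgrade $\le$ to $<$.
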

\begin{proof}
Take the non-smooth Lyapunov function $V(\vct x) = \max_{p,q\in\cV} \|\vct x_p-\vct x_q\|$ to be maximum euclidean distance between the nodes of $\cG$. Since $V(\vct x)=0$ means all nodes are coincident, $\vct x$ must belong to the subspace $\cJ$. Now we only need to show that $\tilde{\mathscr L}_{\cX} V(\vct x) < 0$. 
Let $p$ and $q$ be the only two nodes with maximum distance $d_{pq}$. Let $\vct e_{pq}=\frac{\vct x_q-\vct x_p}{\|\vct x_q-\vct x_p\|}$ be the unit vector pointing to $q$ from $p$. Hence, $\frac{\partial V}{\partial \vct x_p}=-\vct e_{pq}$ and $\frac{\partial V}{\partial \vct x_q}=\vct e_{pq}$ while other derivatives are zero. Unless either $p$ or $q$ is the leader, both nodes have neighbors. For any $k\in\cN_p$, we can write $\vct x_q-\vct x_p=\vct x_q-\vct x_k + \vct x_k-\vct x_p$, or equivalently $d_{pq}\vct e_{pq}=d_{pk}\vct u_{pk}+d_{kq}\vct e_{kq}$ with $d_{pk},d_{kq}< d_{pq}$. Taking a dot product of both sides with $\vct e_{pq}$, we get $\vct e_{pq}^\trp\vct u_{pk}>0$. Therefore, since $\dot{\vct x}_p = \sum_{k\in\cN_p}\vct u_{pk}$ we get $\vct e_{pq}^\trp\dot{\vct x}_p>0$. Same argument is valid for $q$ if $q$ is not the leader. Hence, $\tilde{\mathscr L}_{\cX} V(\vct x) = \vct e_{pq}^\trp(\dot{\vct x}_q-\dot{\vct x}_p)<0$. 
Now suppose there is more than a single pair of nodes with maximum distance between them, probably with some coinciding nodes. In this case, $\Omega_V$ is the set of all positions such that there exists more than one pair of nodes with maximum distance and $\mathfrak D V(\vct x)$ is the convex hull of limits of derivatives of $V(\vct x)$ as $\vct x$ is approached from outside of $\Omega_V$. Therefore, for any pair $\{p,q\}$ with maximum distance we have $\vct y\in\mathfrak D V$ such that $\vct y_p = -\vct y_q = -\vct e_{pq}$ and other entries are zero, and $\mathfrak D V$ is the convex hull of such vectors. Moreover, from the earlier argument we have $\dot{\vct x}^\trp \vct y <0$.
If none of the pairs with maximum distance are coincident with any of their neighbors, we have $\cK[\cX](\vct x)_p=\dot{\vct x}_p$ for any node $p$ from the pairs and consequently $\dot{\vct x}^\trp \bm\zeta<0$ for any $\bm\zeta\in\mathfrak D V$. Therefore, $\tilde{\mathscr L}_{\cX} V(\vct x)$ is the intersection of negative values which is either negative or empty. In the case that a node from a pair is coincident with a neighbor, say $p$ is coincident with $p^\prime\in\cN_p$ from $\{p,q\}$, then $\{p^\prime,q\}$ is also a pair with maximum distance. We have $\cK[\cX](\vct x)_p=\dot{\vct x}_p+\bm\epsilon$ for $\bm\epsilon\in\bar{\mathbb B}_d(\vct 0, 1)$ and $\cK[\cX](\vct x)_{p^\prime}=\dot{\vct x}_{p^\prime}$. In this case $\tilde{\mathscr L}_{\cX} V$ becomes the intersection over the inner product of members of two sets, and since for the pair $\{p^\prime,q\}$ the Lie derivative is negative, the intersection is again either negative or empty. Therefore, from Theorem \ref{thm:lasalle} asymptotic stability of consensus follows.
\end{proof}


Theorem \ref{thm:directedstability} only establishes asymptotic stability. However, from observation it can be seen the convergence happens in finite time. A framework with a directed graph $\cG$ satisfying assumption \ref{as:directed} and with dynamics given in \eqref{eq:consensuscontrolleridirected} can be seen as a cascade system. Partitioning $\cG$ into strongly connected components, each component is seen as a subsystem. Since there is path between every subsystem to the component containing the globally reachable node(s), subsystems form a directed acyclic graph with a single leaf. Therefore, the first step in proving finite-time convergence of \eqref{eq:consensuscontrolleridirected} is to show finite-time convergence in strongly connected graphs. Here, we present a conjectured upper bound on the convergence time in strongly connected graphs.

\begin{conjecture}
In a strongly connected graph with $n$ nodes, convergence of controller \eqref{eq:consensuscontrolleridirected} happens in finite time and the convergence time is upper bounded by $\frac{l}{2n}\sec^2(\frac{\pi}{n})$ where $l$ is the sum of distances between nodes over the longest hamiltonian cycle in the initial formation at $t_0$.
\end{conjecture}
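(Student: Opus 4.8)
The plan is to use the perimeter of a fixed Hamiltonian cycle as a non-smooth Lyapunov function and then invoke the finite-time Proposition. Fix a longest Hamiltonian cycle $v_1\to v_2\to\cdots\to v_n\to v_1$ of $\cG$ and set $l(\vct x)=\sum_{k=1}^{n} d_{v_k v_{k+1}}$ (indices mod $n$), so that $l(\vct x)=0$ exactly on the consensus subspace $\cJ$. Since $\cG$ is strongly connected it satisfies Assumption \ref{as:directed}, so Theorem \ref{thm:directedstability} already delivers asymptotic convergence; the remaining task is to upgrade this to finite time and to certify the stated bound by producing a \emph{uniform} strictly-negative upper bound on the set-valued Lie derivative $\tilde{\mathscr L}_{\cX} l$, after which the finite-time Proposition applies with $\min_S l = 0$.

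First I would compute $\dot l$ at points of differentiability. Using $\dot d_{ij}=\vct u_{ij}^\trp(\dot{\vct x}_j-\dot{\vct x}_i)$ together with $\dot{\vct x}_i=\sum_{j\in\cN_i^+}\vct u_{ij}$, the cycle-successor contribution at each node $v_k$ is $\vct u_{v_k v_{k+1}}^\trp\vct u_{v_{k+1} v_{k+2}}-1=\cos\theta_k-1$, where $\theta_k$ is the angle between two consecutive bearings along the cycle. Hence on a pure directed cycle one obtains the clean expression $\dot l=-\sum_{k=1}^{n}(1-\cos\theta_k)=-2\sum_{k=1}^{n}\sin^2(\theta_k/2)$, while in a general strongly connected graph each additional out-edge of a cycle node contributes a further inner-product term. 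A key structural fact is that $\dot l$ is invariant under uniform scaling of the formation (bearings are scale-free), so it depends only on the shape; a single shape-dependent bound therefore controls the entire trajectory rather than only the initial instant, which is precisely what a finite-time argument needs.

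The core step is to establish $\max\tilde{\mathscr L}_{\cX} l \le -2n\cos^2(\pi/n)$ uniformly over non-consensus shapes, since by the finite-time Proposition this yields $t_{\mathrm{reach}}\le l(\vct x(t_0))/\big(2n\cos^2(\pi/n)\big)=\frac{l}{2n}\sec^2(\tfrac{\pi}{n})$. I would phrase this as minimizing $\sum_k(1-\cos\theta_k)$ subject to the polygon-closure constraint $\sum_k d_{v_k v_{k+1}}\vct u_{v_k v_{k+1}}=\vct 0$ (positive lengths, turning angles summing to a multiple of $2\pi$), conjecturally extremized by the regular configuration with all $\theta_k$ equal. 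The Filippov bookkeeping at coincident nodes is handled exactly as in Theorems \ref{th:finite} and \ref{thm:directedstability}: where several cycle vertices collide the offending edges carry zero weight, $\tilde{\mathscr L}_{\cX} l$ reduces to the relevant intersection of inner products, and one checks it remains bounded above by the same negative constant.

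The hard part will be the worst-case constant, and two difficulties compound. First, in a general strongly connected graph the extra out-edges change both the speed and direction of each node, so the reduction to a clean angle sum over the chosen cycle is not automatic and one must argue (by a monotonicity/comparison estimate) that these extra terms can only accelerate the decrease of $l$; a related subtlety is that a strongly connected $\cG$ need not even possess a Hamiltonian cycle, so one may have to replace it by a closed spanning walk. Second, even on a pure cycle the constrained minimization does not obviously select the regular polygon: the direct computation for the regular $n$-gon gives $\dot l=-2n\sin^2(\pi/n)$, i.e. a $\csc^2$ rather than a $\sec^2$ factor, and near-degenerate collinear polygons give yet different values, so isolating the genuine extremal shape — and thereby confirming or correcting the precise coefficient $\frac{1}{2n}\sec^2(\pi/n)$ — is the crux. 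I would therefore settle the pure-cycle optimization first to pin down the exact constant, then absorb the extra-edge terms by the comparison argument, and finally assemble finite-time convergence for arbitrary directed $\cG$ with a globally reachable node through the cascade decomposition into strongly connected components outlined before the statement.
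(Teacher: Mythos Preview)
The paper does not prove this statement: it is explicitly labeled a \emph{Conjecture}, and the Conclusions section states that ``in the consensus problem of strongly connected directed graphs, finite-time convergence remains unsolved.'' There is therefore no paper proof to compare your proposal against; you are attempting something the authors leave open.

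Your plan is the natural one, and you have in fact already put your finger on the fatal obstruction. On a pure directed $n$-cycle your computation $\dot l=-2n\sin^2(\pi/n)$ for the regular $n$-gon is correct, and by rotational symmetry of cyclic pursuit the regular polygon is preserved along the flow, so $\dot l$ is constant and the convergence time is \emph{exactly} $\frac{l}{2n}\csc^2(\pi/n)$. For $n=4$ this equals $\frac{l}{2n}\sec^2(\pi/n)$ (consistent with the paper's simulation, which only tests $n=4$), but for $n>4$ one has $\csc^2(\pi/n)>\sec^2(\pi/n)$, so the regular $n$-gon already appears to exceed the conjectured bound. Thus the uniform estimate $\max\tilde{\mathscr L}_{\cX} l \le -2n\cos^2(\pi/n)$ you aim for cannot hold, and more to the point the conjectured constant itself seems incorrect beyond $n=4$. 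Your other concern --- that a strongly connected graph need not contain a Hamiltonian cycle at all --- is also a genuine gap in both the statement and your approach. The finite-time question may well have a positive answer, but not with this coefficient, and not via a uniform bound on $\dot l$ along a Hamiltonian cycle.
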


\section{BEARING-ONLY FORMATION CONTROL}
The goal of bearing-only formation control is to achieve and maintain a desired formation specified by bearings for each edge in the sensing graph using only bearing measurements, as opposed to linear formation control which requires relative positions instead of bearings.

Linear formation control problems draw advantage from the linearity of the controller $\dot{\vct x}= -\mL \vct x$ in the consensus problem. A simple change of variables leads to exponential convergence to a desired formation congruent to $\vct x^*$ by means of  $\dot{\vct x}= -\mL (\vct x-\vct x^*)$ which only differs by a constant term $\mL\vct x^*$.
In this section, we address the nonlinear formation control problem using bearings for undirected and directed sensing graphs. Similar to the linear problem, the controllers proposed are of the form $\dot{\vct x}=\bm f(\vct x)-\bm f(\vct x^*)$ and differ by a constant term $-\bm f(\vct x^*)$ compared to consensus controllers $\dot{\vct x}=\bm f(\vct x)$ introduced in the previous section. 

Specifically, we prove Lyapunov stability of Filippov solutions of the controller given in \cite{tron2016bearing} for undirected graphs and also prove cascade stability of the aforementioned controller for directed acyclic graphs. For directed cyclic graphs, we present an example which shows that the Jacobian matrix of the controller in \cite{tron2016bearing} may have eigenvalues with positive real parts. Along the same line, we present another example that shows directed bearing Laplacian matrix may have eigenvalues with negative real parts, rejecting the conjecture in \cite{zhao2015bearing}.


\subsection{Undirected graphs}
Given an undirected graph $\cG$, the following non-smooth and non-convex edge potential function was suggested in \cite{tron2016bearing} (as reformulated in \cite{zhao2015bearing}):
\begin{equation}
\psi_{\{i,j\}}(\vct x_i,\vct x_j,\vct u^*_{ij})= \frac{1}{2}d_{ij}\|\vct{u}_{ij} - \vct{u}_{ij}^*\|^2,
\end{equation}

which is zero only if $\vct u_{ij}$ equals to $\vct u_{ij}^*$ or if $d_{ij}$ is zero. Similar to the undirected consensus problem, summing these terms over all edges yields the following objective function:

\begin{equation}
\psi(\vct x,\vct u^*)=\sum_{\{(i,j),(j,i)\}\subseteq\cE} \psi_{\{i,j\}}(\vct x_i,\vct x_j,\vct u^*_{ij})
\label{eq:lyapbearing}
\end{equation}

By setting the velocity of each node to be the negative of the gradient of $\psi$ with respect to its position, we obtain the controller given in \cite{tron2016bearing}:

\begin{equation}
\dot{\vct x}_i = -\frac{\partial \psi}{\partial \vct x_i} = \sum_{j\in\mathcal{N}_i} \vct{u}_{ij} - \vct{u}_{ij}^*,
\label{eq:dynamicsformation}
\end{equation}
which can be written in the aggregated form as:

\begin{equation}
\dot{\vct{x}}= \vct H (\vct u - \vct u^*).
\label{eq:dynamicsformationagg}
\end{equation}

Similar to the potential function in the consensus problem, $\psi_{\{i,j\}}$ is not differentiable when $d_{ij}$ is zero and \eqref{eq:dynamicsformationagg} therefore becomes discontinuous when two agents are coliding. Denoting \eqref{eq:dynamicsformationagg} by $\cX$, the set valued map of $\cX$ is given by:

\begin{equation}
\cK[\cX](\vct x) = - \mathfrak D \psi(\vct x) = \mH (\vct u - \vct u^*) \bm\oplus \cI
\end{equation}

where $\cI$ is defined in \eqref{eq:I}. Similar to the undirected consensus problem, asymptotic stability can be established by using \eqref{eq:lyapbearing} as Lyapunov function.

\begin{proposition}
Controller \eqref{eq:dynamicsformationagg} is asymptotically stable.
\end{proposition}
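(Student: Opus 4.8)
The plan is to apply the LaSalle Invariance Principle (Theorem~\ref{thm:lasalle}) with the non-smooth potential $\psi$ of \eqref{eq:lyapbearing} as the Lyapunov function, reusing the machinery developed for the undirected consensus case. First I would verify that Filippov solutions of $\dot{\vct x}=\cX(\vct x)=\mH(\vct u-\vct u^*)$ exist: the right-hand side is bounded (each bearing term has unit norm) and $\cK[\cX]$ is nonempty-, compact-, and convex-valued and upper semicontinuous, exactly as for \eqref{eq:dynamics}. I would also note that, since $\vct 1_n^\trp \cH=\vct 0$, the centroid is invariant (the analogue of Lemma~\ref{lemma-centroid}), which fixes the translational degree of freedom and, together with the monotone decrease of $\psi$, lets me confine the analysis to a compact strongly invariant set $S$ (checking boundedness of trajectories here is a point that needs care, since $\psi$ is not radially unbounded).

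The core computation is the set-valued Lie derivative. On the smooth region one has $\psi_{\{i,j\}}=d_{ij}-(\vct x_j-\vct x_i)^\trp\vct u^*_{ij}$, whose only non-differentiable part is the distance term $d_{ij}$, so the generalized gradient inherits the same ball structure as in the consensus problem: $\mathfrak D\psi(\vct x)=-\mH(\vct u-\vct u^*)\bm\oplus\cI$, with $\cI$ the symmetric set from \eqref{eq:I}, matching $\cK[\cX](\vct x)=\mH(\vct u-\vct u^*)\bm\oplus\cI$. Setting $\vct g=\mH(\vct u-\vct u^*)$ and following Theorem~\ref{th:finite}, I would take an arbitrary $\vct v=\vct g+\bm\alpha\in\cK[\cX]$ with $\bm\alpha\in\cI$ and require the inner product $(-\vct g+\bm\beta)^\trp\vct v$ to be independent of $\bm\beta\in\cI$. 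Since $\cI$ is symmetric and contains $\vct 0$, this forces $\bm\beta^\trp\vct v=0$ for all $\bm\beta\in\cI$, i.e. $\vct v\perp\mathrm{span}(\cI)$; using $\bm\alpha\in\mathrm{span}(\cI)$ then gives $\vct g^\trp\vct v=\|\vct v\|^2$ and hence $\ell=-\|\vct v\|^2\le 0$. Therefore $\max\tilde{\mathscr L}_{\cX}\psi(\vct x)\le 0$ (and is empty when no such $\vct v$ exists), so $\psi$ is non-increasing along Filippov solutions.

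Having established the sign condition, I would invoke Theorem~\ref{thm:lasalle}: every solution starting in $S$ converges to the largest weakly invariant set $M$ contained in $\overline{Z}_{\cX,\psi}\cap S$, where $Z_{\cX,\psi}=\{\vct x:0\in\tilde{\mathscr L}_{\cX}\psi(\vct x)\}$. From $\ell=-\|\vct v\|^2$ one sees that $0\in\tilde{\mathscr L}_{\cX}\psi(\vct x)$ exactly when $\vct 0\in\cK[\cX](\vct x)$, equivalently $\vct 0\in\mathfrak D\psi(\vct x)$; thus $Z_{\cX,\psi}$ is precisely the set of Clarke critical points of $\psi$, i.e. the Filippov equilibria of the controller. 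On this set the velocities vanish and the bearings are stationary, which is the asserted asymptotic stability.

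The main obstacle is twofold. The clean disjoint-support argument of Theorem~\ref{th:finite}—which excluded the degenerate multiplier $\bm\alpha=-\vct g$ by observing that $\breve{\mL}\vct x$ and $\cI$ are supported on complementary sets of agents—does not carry over verbatim, because at a colliding edge $(i,j)$ the term $\mH(\vct u-\vct u^*)$ still carries the nonzero offset $-\vct u^*_{ij}$; the orthogonality-to-$\mathrm{span}(\cI)$ argument above is the replacement I would use to keep $\ell=-\|\vct v\|^2$ valid in all cases. The second, more fundamental, difficulty is that $\psi$ is \emph{non-convex} (unlike the convex $\varphi$ of the consensus problem), so LaSalle yields convergence only to the set of critical points of $\psi$ rather than global convergence to the desired bearings $\vct u=\vct u^*$; upgrading this to local asymptotic stability of the target formation itself would require an additional rigidity/non-degeneracy argument to exclude spurious critical points, and I would flag that as the delicate part of the statement.
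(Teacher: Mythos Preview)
Your approach is essentially the paper's: use $\psi$ as a non-smooth Lyapunov function, transplant the set-valued Lie derivative computation from Theorem~\ref{th:finite}, and invoke the LaSalle principle. The paper's proof is in fact just two sentences: it asserts $\max\tilde{\mathscr L}_{\cX}\psi(\vct x)=-\|\mH(\vct u-\vct u^*)\|^2\le 0$ ``following the proof of Theorem~\ref{th:finite}'' and then cites \cite{tron2016bearing}, Proposition~3, for the equilibrium characterization.

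Your two flagged obstacles are handled differently in the paper. On the first (the disjoint-support step of Theorem~\ref{th:finite} not transferring verbatim because $-\vct u^*_{ij}$ survives at collisions), the paper simply does not comment; your orthogonality-to-$\mathrm{span}(\cI)$ replacement is a legitimate way to fill that gap and is more careful than what the paper writes. On the second (non-convexity of $\psi$ and possible spurious critical points), this is precisely what the external citation resolves: the result from \cite{tron2016bearing} states that $\mH(\vct u-\vct u^*)=\vct 0$ if and only if $\vct u_{ij}=\vct u^*_{ij}$ for every edge. So the zero set of the Lie derivative coincides (on the collision-free region) with the target formation, and no additional rigidity argument is needed. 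Without that fact your proof stops at ``convergence to Clarke critical points'', which is weaker than what the proposition asserts; with it, you recover the paper's conclusion.
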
 
\begin{proof}
Following the proof of Theorem \ref{th:finite}, we have $\max\tilde{\mathscr L}_{\cX} \psi(\vct x) = -\|\mH (\vct u - \vct u^*)\|^2\leq 0$. It was shown in \cite{tron2016bearing}[Proposition 3] that $\mH (\vct u - \vct u^*)$ equals zero if and only if $\vct u_{ij}=\vct u_{ij}^*$ for every $(i,j)\in\cE$. 
\end{proof}
As a result of this, a formation $\cF=(\cG,\vct x)$ with initial position $\vct x_0$ will converge to a formation $\vct x^\star$ which is similar to $\vct x^*$. If the formation is bearing rigid, $\vct x^\star$ is also similar to $\vct x^*$. Furthermore, following the same argument from Lemma \ref{lemma-centroid}, it can be shown that the centroid of Filippov solutions of \eqref{eq:dynamicsformationagg} is invariant.

\subsection{Directed graphs}

In this section we consider the controller \eqref{eq:dynamicsformationagg} for directed sensing graphs, given by:

\begin{equation}
\dot{\vct x}_i = \sum_{j\in\mathcal{N}_i^+} \vct{u}_{ij} - \vct{u}_{ij}^*,
\label{eq:dynamicsformationdirected}
\end{equation}

which can be written in the aggregate form as:

\begin{equation}
\dot{\vct{x}}= \mH_+ (\vct u - \vct u^*).
\label{eq:formationdirected}
\end{equation}

We assume that each agent only acts based on the measurements directly obtained by itself. Similar to the directed consensus problem, each agent $i$ has its own private function $\psi_i(\vct x, \vct u^*)=\sum_{j\in\cN_i^+}\psi_{\{i,j\}}$ which tries to minimize thorough gradient descent. Evaluating the rate at which $\psi_i$ decreases is difficult since it is also dependent on the dynamics of neighbors of $i$. In the directed consensus problem, we were able to use the maximum distance between nodes as a global metric to measure how far the system is from equilibrium. For the problem at hand, finding a similar global metric seems unrealistic and the only option left is to investigate the evolution of private functions.

We begin by showing that if the sensing graph is a directed cycle, we can use $\psi(\vct x,\vct u^*)$ to prove stability of \eqref{eq:formationdirected}. Later we give intuition on the equilibria of $\psi_i$s and prove convergence of directed acyclic graphs.

\begin{proposition}
Controller \eqref{eq:formationdirected} is asymptotically stable for a directed cycle graph.
\label{prop:directedcycle}
\end{proposition}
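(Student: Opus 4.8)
The plan is to reuse the edge potential $\psi$ of \eqref{eq:lyapbearing} as a non-smooth Lyapunov function and invoke the LaSalle principle (Theorem~\ref{thm:lasalle}), exactly as in the undirected case, but now accounting for the fact that on a directed cycle the controller \eqref{eq:formationdirected} is no longer the gradient flow of $\psi$. First I would fix the cyclic labelling $1\to 2\to\cdots\to n\to 1$, so that $\cN_i^+=\{i+1\}$ (indices mod $n$) and the controller collapses to the single-term law $\dot{\vct x}_i=\vw_i$ with $\vw_i\triangleq \vct u_{i,i+1}-\vct u_{i,i+1}^*$. Since node $i$ appears as the tail of edge $(i,i+1)$ and the head of edge $(i-1,i)$, the exact gradient is $\partial\psi/\partial\vct x_i=-\vw_i+\vw_{i-1}$, which differs from $-\dot{\vct x}_i$ by the incoming-edge term $\vw_{i-1}$; the flow is therefore \emph{not} pure gradient descent, and a naive computation of $\dot\psi$ leaves sign-indefinite cross terms.

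The crux of the argument is to show that these cross terms cancel thanks to the cyclic coupling. Evaluating the smooth part of the set-valued Lie derivative gives $\dot\psi=\sum_i(-\vw_i+\vw_{i-1})^\trp\vw_i=-\tfrac12\sum_i\|\vw_i-\vw_{i-1}\|^2\le 0$, where the telescoping uses the periodicity $\sum_i\|\vw_{i-1}\|^2=\sum_i\|\vw_i\|^2$, and equality holds iff $\vw_1=\cdots=\vw_n$. I would then extend this to the Filippov setting at collisions ($d_{i,i+1}=0$) by adjoining the collision terms $\cI$ of \eqref{eq:I} and repeating the inner-product bookkeeping of Theorem~\ref{th:finite}, to conclude $\max\tilde{\mathscr L}_{\cX}\psi\le 0$ (or that the Lie derivative is empty) on the whole state space, so that the hypotheses of Theorem~\ref{thm:lasalle} are met for Filippov solutions.

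Applying LaSalle then yields convergence to the largest weakly invariant set inside $\{0\in\tilde{\mathscr L}_{\cX}\psi\}=\{\vw_1=\cdots=\vw_n=:\vw\}$, on which $\dot{\vct x}=\vct 1_n\otimes\vw$ is a rigid translation that freezes the formation shape. The hard part will be the final characterization of this set, i.e. showing that the only weakly invariant configurations have $\vw=\vct 0$, hence $\vct u_{i,i+1}=\vct u_{i,i+1}^*$ for every edge. Two obstacles appear. First, unlike the undirected case (Lemma~\ref{lemma-centroid}), the centroid is not conserved under \eqref{eq:formationdirected}, so a nonzero $\vw$ would drive the formation off to infinity, and the sublevel sets of $\psi$ are noncompact both in the translation direction and, because $\psi$ is positively homogeneous of degree one, along the scaling ray; I would therefore pass to the reduced shape coordinates $\vct z_i=\vct x_{i+1}-\vct x_i$ subject to the cycle-closure $\sum_i\vct z_i=\vct 0$ (with a scale normalization), in which rigid translations become genuine fixed points and compactness can be recovered. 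Second, within that reduced set I would rule out $\vw\neq\vct 0$ directly: the relation $\vct u_{i,i+1}=\vct u_{i,i+1}^*+\vw$ together with $\|\vct u_{i,i+1}\|=\|\vct u_{i,i+1}^*\|=1$ forces $\vw^\trp\vct u_{i,i+1}^*=-\tfrac12\|\vw\|^2$ for all $i$, so all desired bearings would share a common projection onto $\vw$; combined with the closure constraint $\sum_i d_{i,i+1}(\vct u_{i,i+1}^*+\vw)=\vct 0$, this over-determined system forces $\vw=\vct 0$ for any non-degenerate cycle, giving convergence to the desired shape up to translation and scaling. Pinning down the precise non-degeneracy hypothesis under which the translating mode is excluded is, I expect, the genuine obstacle, and is presumably where a persistence-type condition on the desired bearings must enter.
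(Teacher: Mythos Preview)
Your approach is essentially identical to the paper's: the same Lyapunov function $\psi$, the same telescoping identity $\dot\psi=-\tfrac12\sum_i\|\vw_i-\vw_{i-1}\|^2$, and the same invariant-set characterization via the unit-norm constraint plus cycle closure. The paper in fact argues more tersely than you do (it simply assumes collisions do not occur and does not discuss compactness), so your extra care about the Filippov set-valued Lie derivative and about quotienting out translation/scale is additional rigor rather than a different route.

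Your closing worry is unwarranted, however: the ``over-determined system'' already forces $\vw=\vct 0$ with no non-degeneracy hypothesis on the desired bearings. Dotting the cycle-closure identity $\sum_i d_{i,i+1}\vct u_{i,i+1}=\vct 0$ with $\vw$ and using $\vw^\trp\vct u_{i,i+1}=\tfrac12\|\vw\|^2$ (equivalently your $\vw^\trp\vct u_{i,i+1}^*=-\tfrac12\|\vw\|^2$ plus $\vw^\trp\vw=\|\vw\|^2$) gives $\tfrac12\|\vw\|^2\sum_i d_{i,i+1}=0$, hence $\vw=\vct 0$ whenever at least one edge has positive length; this is exactly how the paper closes the argument, and no persistence-type condition enters.
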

\begin{proof}
In a directed cycle, we have $\dot{\vct x}_i=\vct u_{ij}-\vct u_{ij}^*$ where $j\in\cN_i^+$ is the only neighbor of $i$. Also, we have $\frac{\partial\psi}{\partial \vct x_i}=-(\vct u_{ij}-\vct u_{ij}^*)-(\vct u_{ik}-\vct u_{ik}^*)$ where $i\in\cN_k^+$. Assuming collisions do not occur, we have:
\begin{equation*} 
\begin{aligned}
\dot{\psi}&=\sum_{i\in\cV} -[\vct u_{ij}-\vct u_{ij}^*+\vct u_{ik}-\vct u_{ik}^*]^\trp(\vct u_{ij}-\vct u_{ij}^*)\\
&=\sum_{i\in\cV}( -\|\dot{\vct x}_i\|^2+\dot{\vct x}_i^\trp\dot{\vct x}_k)
\end{aligned}
\end{equation*}
which is due to $\dot{\vct x}_k=\vct u_{ki}-\vct u_{ki}^*$. Since there are as many edges as nodes, we can rewrite $\dot\psi$ over edges as:
\begin{equation*} 
\begin{aligned}
\dot\psi &= \sum_{(k,i)\in\cE} -\frac{1}{2}\|\dot{\vct x}_i\|^2+\dot{\vct x}_i^\trp\dot{\vct x}_k -\frac{1}{2}\|\dot{\vct x}_k\|^2 \\
&=\sum_{(k,i)\in\cE} -\frac{1}{2} \|\dot{\vct x}_i-\dot{\vct x}_k\|^2\leq 0
\end{aligned}
\end{equation*}
Hence $\dot\psi$ is always negative unless all nodes have the same velocity $\dot{\vct x}_i=\dot{\vct x}_k$. Suppose all $\dot{\vct x}_i = \vct w$, then we have $\vct u_{ij} -\vct w= \vct u_{ij}^*$. Taking the norm of both sides, we get $\vct w^\trp\vct u_{ij}=\frac{1}{2}\|\vct w\|^2$. Furthermore, we have $\sum_{i\in\cV}d_{ij}\vct u_{ij}=\vct 0$, hence taking a dot product with $\vct w$ we get $\sum_{i\in\cV}d_{ij}\vct w^\trp\vct u_{ij}=\sum_{i\in\cV}\frac{d_{ij}}{2}\|\vct w\|^2=\vct 0$ which means $\vct w=\vct 0$.
\end{proof}

When the out-degree of a node $i$ is one, as in a directed cycle graph, the equilibrium points of its objective function $\psi_i$ is a half-line that starts at the position of its neighbor and extends to infinity in the direction of $-\vct u_{ij}^*$. If the out-degree is more than one, the equilibrium point(s) of $\psi_i$ are such that $\sum_{j\in\cN_i^+}\vct u_{ij}=\sum_{j\in\cN_i^+}\vct u_{ij}^*$. This, however, does not necessarily mean that the bearing measurement of each neighbor $\vct u_{ij}$ is equal to the desired bearing $\vct u_{ij}^*$ assuming the equilibrium point(s) exists. Before we discuss the existence of equilibrium points, we present the the following definition which is motivated by this problem.

\begin{definition}[Bearing Persistence]
A directed graph $\cG$ is bearing persistent such that for any $\vct x$ and $\vct x^*\in\real{dn}$ and all $i\in\cV$, $\sum_{j\in\cN_i^+} \vct u_{ij}-\vct u_{ij}^*=\vct 0$ if and only if $\vct x$ and $\vct x^*$ are equivalent.
\end{definition}

\begin{remark}
A bearing persistent framework may not be baring rigid. The opposite direction is also true (see Fig. \ref{fig:persistence}). Also, It can be immediately deduced that undirected graphs and directed graphs with out-degree one are bearing persistent.  
\end{remark}


\begin{figure}
\centering
\subfloat[]{\label{fig:persistence:1}\begin{tikzpicture}
\coordinate(n1) at (0,2);
\coordinate(n2) at (2,2);
\coordinate(n3) at (0,0);
\coordinate(n4) at (2,0);

\coordinate(ni) at (0,0);
\coordinate(nj) at (2,2);

\tikzstyle{arman}=[circle,draw,thick, inner sep=1pt, minimum size=3pt]

\node[arman](N1) at (n1) {1};
\node[arman](N2) at (n2) {2};
\node[arman](N3) at (n3) {3};
\node[arman](N4) at (n4) {4};



\begin{scope}[thick,decoration={
    markings,
    mark=at position 0.55 with {\arrow{latex}}}
    ] 
    \draw[postaction={decorate}] (N1)--(N2);
    \draw[postaction={decorate}] (N1)--(N3);
    \draw[postaction={decorate}] (N3)--(N4);
    \draw[postaction={decorate}] (N2)--(N4);
    \draw[postaction={decorate}] (N1)--(N4);
\end{scope}
\end{tikzpicture}} \quad
\subfloat[]{\label{fig:persistence:2}\begin{tikzpicture}
\coordinate(n1) at (0.1632,2.25);
\coordinate(n2) at (3,2);
\coordinate(n3) at (0,0);
\coordinate(n4) at (3,0);

\coordinate(ni) at (0,0);
\coordinate(nj) at (2,2);

\tikzstyle{arman}=[circle,draw,thick, inner sep=1pt, minimum size=3pt]

\node[arman](N1) at (n1) {1};
\node[arman](N2) at (n2) {2};
\node[arman](N3) at (n3) {3};
\node[arman](N4) at (n4) {4};



\begin{scope}[thick,decoration={
    markings,
    mark=at position 0.55 with {\arrow{latex}}}
    ] 
    \draw[postaction={decorate}] (N1)--(N2);
    \draw[postaction={decorate}] (N1)--(N3);
    \draw[postaction={decorate}] (N3)--(N4);
    \draw[postaction={decorate}] (N2)--(N4);
    \draw[postaction={decorate}] (N1)--(N4);
\end{scope}
\end{tikzpicture}}\\
\subfloat[]{\label{fig:rigidity:1}\begin{tikzpicture}
\coordinate(n1) at (0,2);
\coordinate(n2) at (2,2);
\coordinate(n3) at (0,0);
\coordinate(n4) at (2,0);

\coordinate(ni) at (0,0);
\coordinate(nj) at (2,2);

\tikzstyle{arman}=[circle,draw,thick, inner sep=1pt, minimum size=3pt]

\node[arman](N1) at (n1) {1};
\node[arman](N2) at (n2) {2};
\node[arman](N3) at (n3) {3};
\node[arman](N4) at (n4) {4};



\begin{scope}[thick,decoration={
    markings,
    mark=at position 0.55 with {\arrow{latex}}}
    ] 
    \draw[postaction={decorate}] (N1)--(N2);
    \draw[postaction={decorate}] (N2)--(N4);
    \draw[postaction={decorate}] (N4)--(N3);
    \draw[postaction={decorate}] (N3)--(N1);
    \draw[postaction={decorate}] (N1)--(N4);
\end{scope}
\end{tikzpicture}}\quad
\subfloat[]{\label{fig:rigidity:2}\begin{tikzpicture}
\coordinate(n1) at (0,2);
\coordinate(n2) at (3,2);
\coordinate(n3) at (0,0);
\coordinate(n4) at (3,0);

\coordinate(ni) at (0,0);
\coordinate(nj) at (2,2);

\tikzstyle{arman}=[circle,draw,thick, inner sep=1pt, minimum size=3pt]

\node[arman](N1) at (n1) {1};
\node[arman](N2) at (n2) {2};
\node[arman](N3) at (n3) {3};
\node[arman](N4) at (n4) {4};



\begin{scope}[thick,decoration={
    markings,
    mark=at position 0.55 with {\arrow{latex}}}
    ] 
    \draw[postaction={decorate}] (N1)--(N2);
    \draw[postaction={decorate}] (N2)--(N4);
    \draw[postaction={decorate}] (N4)--(N3);
    \draw[postaction={decorate}] (N3)--(N1);
\end{scope}
\end{tikzpicture}}
\caption{In \protect\subref{fig:persistence:1} and \protect\subref{fig:persistence:2} sum of the bearing measurements of nodes with the same index is equal but the formations are not equivalent, which means the underlying graph is not bearing persistent. \protect\subref{fig:rigidity:1} and  \protect\subref{fig:rigidity:2} are bearing persistent graphs. Graph in \protect\subref{fig:rigidity:1} is bearing rigid as well while \protect\subref{fig:rigidity:2} is not.}
\label{fig:persistence}
\end{figure}
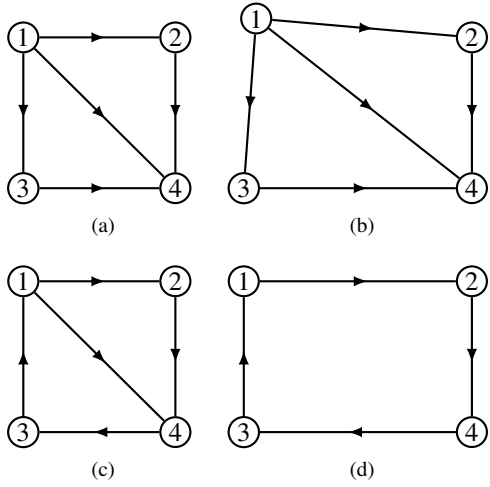

Even if the sensing graph is not bearing persistent, it is not trivial to study the equilibria of \eqref{eq:formationdirected}. In some applications, achieving the exact bearings between the agents might not be important, but rather the overall placement of an agent with respect to those it observes is. Here we present a short and informal proof on uniqueness of equilibrium of \eqref{eq:dynamicsformationdirected}.

The equilibrium point of \eqref{eq:dynamicsformationdirected} for agent $i$ with $|\cN_i^+|>1$ is a point such that $\sum_{j\in\cN_i^+} \vct u_{ij}=\sum_{j\in\cN_i^+}\vct u_{ij}^*=\vct v^*$. If $\|\vct v^*\|=|\cN_i^+|$, then $\vct x_i\rightarrow\infty$ if neighbors of $i$ are not all coincident. Hence we assume that always $\|\vct v^*\|<|\cN_i^+|$, or the given desired bearings for an agent are not collinear. Controller \eqref{eq:dynamicsformationdirected} steers $i$ to a point where the sum of its bearing measurements equals $\vct v^*$. The following definition is motivated by this behavior.

\begin{figure}
\centering
\input{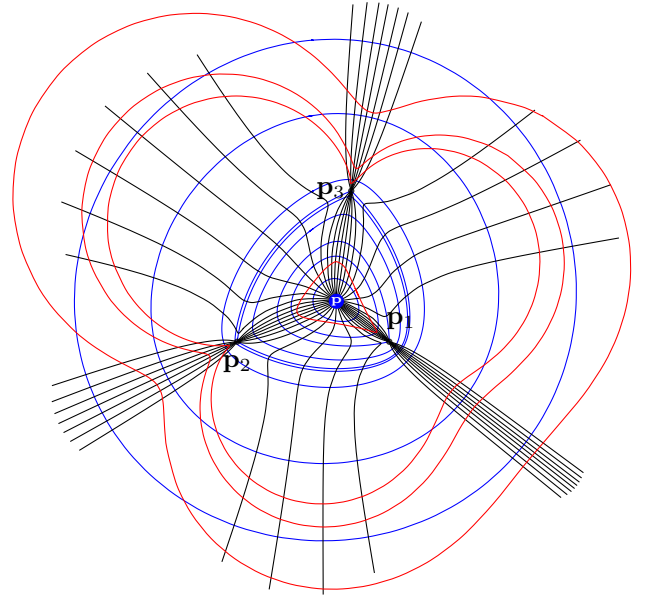}
\caption{Multiple concentric 3-ellipses (blue curves) with foci $\{\vct p_i\}_{i=1}^3$. Point $\vct p$ is the geometric median of focal points. Direction of gradient of $\vartheta(.)$ does not change along each black curve starting from $\vct p$, and its magnitude does not change along red curve.}
\label{fig:ellipse}
\end{figure}

\begin{definition}
a \emph{$k$-ellipsoid} is the set of points such that sum of their euclidean distances from $k$ fixed points $\{\vct p_i\in\real{d}\}_{i=1}^k$ called foci is constant. Let $\vartheta(\vct y) \triangleq \sum_{i=1}^k \|\vct y-\vct p_i\|$ be the sum of distances to foci from point $\vct y$. A $k$-ellipsoid denoted as $\Upsilon(\rho)$ is the boundary of the set-valued map $\Theta(\rho)=\{\vct y\in\real{d} \; | \; \vartheta(\vct y) \leq\rho\}$ for a given $\rho\geq\rho^\star$ where $\rho^\star=\min_{\vct y}\vartheta(\vct y)$.
\end{definition}

$\Theta(\rho)$ is a sublevel set of of a convex function and is therefore a bounded convex set. $\Upsilon$ is a closed convex surface and is smooth if it does not contain any of the focal points \cite{nie2008semidefinite}.

Let $\vct v_i\triangleq \frac{\vct p_i-\vct y}{\|\vct p_i-\vct y\|}$ be the unit vector pointing towards $\vct p_i$ from $\vct y$. Gradient of the function $\vartheta(\vct y)$ at a point $\vct y\neq\vct p_i$ is given by $\frac{\partial\vartheta}{\partial\vct y}=\sum_{i=i}^k-\vct v_i$ and its Hessian is given by $\frac{\partial^2\vartheta}{\partial\vct y^2}=\sum_{i=i}^k\frac{1}{\|\vct p_i-\vct y\|}\mP(\vct v_i)$. Hessian of $\vartheta(\vct y)$ is positive definite unless foci are collinear. Even if that is the case, it can easily be shown that $\vartheta(\vct y)$ is strictly convex along any line except the line that contains the foci.

Using this fact, it can be argued that $\Upsilon(\rho)$ for $\rho>\rho^\star$ does not contain a line segment and the direction of gradient of $\vartheta(\vct y)$ or $\sum_{i=1}^k-\vct v_i$ which is parallel to the tangent hyperplane of $\Upsilon(\rho)$ is unique on $\Upsilon(\rho)$. Furthermore, at the geometric median (or line segment) $\|\sum_{i=1}^k\vct v_i\|$ is zero but as $\|\vct y\|\rightarrow\infty$ we have $\|\sum_{i=1}^k\vct v_i\|\rightarrow |\cN_i^+|$. Due to convexity of $\vartheta(.)$, $\mathfrak D\vartheta(\vct y)$ must attain any direction and any length between zero and $|\cN_i^+|$ due to being a monotone function \cite{kachurovskii1960monotone} (see Fig. \ref{fig:ellipse}).

Having established uniqueness of the equilibrium point, it is straightforward to prove stability of \eqref{eq:formationdirected} for directed acyclic graphs. Leaves of a directed acyclic graph does not have any neighbors and are stationary. We define the \emph{degree of cascade} of a node to be the length of the longest path from that node to a leaf of the graph and is unique due to absence of cycles. Starting from degree one to higher degrees, nodes reach their equilibrium. 

\begin{figure*}[t]
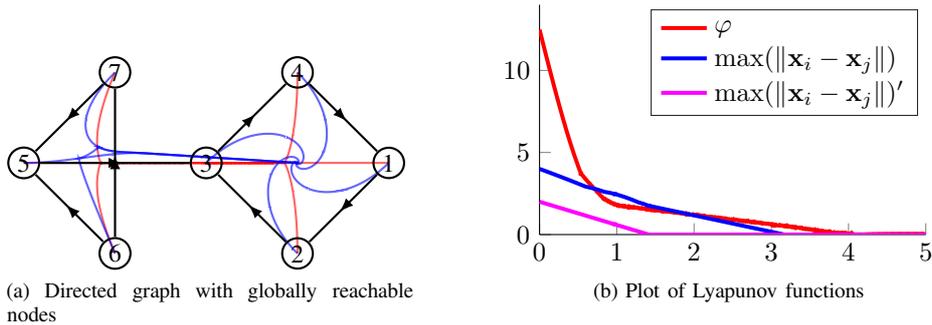

\centering
\subfloat[Directed graph with globally reachable nodes]{\label{fig:cons:1}\input{fig_cons1.tex}}\quad\quad\quad
\subfloat[Plot of Lyapunov functions]{\label{fig:cons:2}\input{fig_cons1_plot.tex}}
\caption{The red points correspond to the undirected version of the  graph in \protect\subref{fig:cons:1} and the blue points correspond to the directed graph. The magenta plot in \protect\subref{fig:cons:2} corresponds to the strongly connected component 1-2-3-4. The proposed upper-bound on convergence time $\frac{l}{4}\sec^2{\frac{\pi}{4}}$ is $\sqrt{2}$ which is exact in this case for the strongly connected component 1-2-3-4.}
\label{fig:conssimul}
\end{figure*}

\begin{figure*}[t]
\centering
\subfloat[Desired formation]{\label{fig:forms:1}\begin{tikzpicture}[thick,scale=1.2, every node/.style={transform shape}]
\tikzstyle{arman}=[circle,draw,thick, inner sep=1pt, minimum size=1pt]
\coordinate(n1) at (0,1.543);
\coordinate(n2) at (-1.2344,0.61721);
\coordinate(n3) at (1.2344,0.61721);
\coordinate(n4) at (0,-0.92582);
\node[arman](N1) at (n1){\footnotesize 1};
\node[arman](N2) at (n2){\footnotesize 2};
\node[arman](N3) at (n3){\footnotesize 3};
\node[arman](N4) at (n4){\footnotesize 4};
\begin{scope}[thick,decoration={markings,mark=at position 0.55 with {\arrow{latex}}}]
\draw[postaction={decorate}] (N1)--(N2);
\draw[postaction={decorate}] (N2)--(N4);
\draw[postaction={decorate}] (N4)--(N3);
\draw[postaction={decorate}] (N3)--(N1);
\draw[postaction={decorate}] (N1)--(N4);
\end{scope}
\end{tikzpicture}} \quad
\subfloat[Initial formation and trajectories]{\label{fig:forms:2}\input{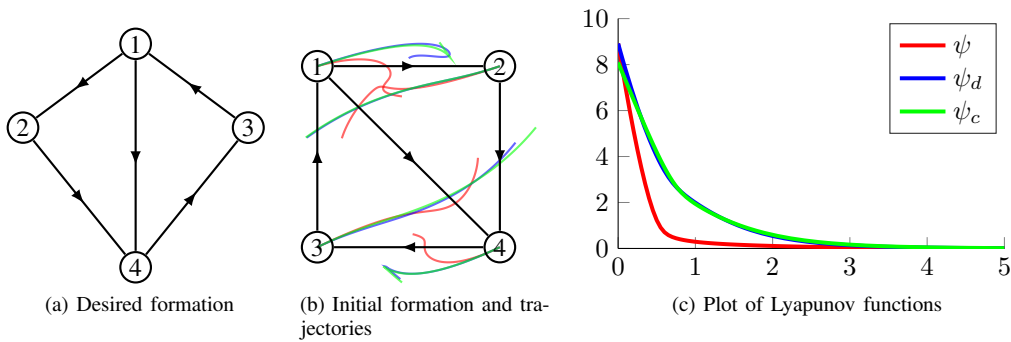}}\quad
\subfloat[Plot of Lyapunov functions]{\label{fig:forms:3}
%
%
\begin{tikzpicture}

\begin{axis}[%
width=2in,
height=1.2in,
at={(0in,0.1in)},
scale only axis,
xmin=0,
xmax=5,
ymin=0,
ymax=10,
axis background/.style={fill=white},
axis x line*=bottom,
axis y line*=left,
legend style={legend cell align=left, align=left, draw=white!15!black}
]
\addplot [color=red, line width=1.5pt]
  table[row sep=crcr]{%
0	8.91607960096898\\
0.05	7.90383318744505\\
0.1	6.93138254591837\\
0.15	6.00185005901714\\
0.2	5.11924990659049\\
0.25	4.28880536283653\\
0.3	3.51731395487234\\
0.35	2.81349261325567\\
0.4	2.18816626851594\\
0.45	1.65410369419366\\
0.5	1.22499328402654\\
0.55	0.910945671742685\\
0.6	0.705940600546193\\
0.65	0.580293652842925\\
0.7	0.499789797368673\\
0.75	0.443126294819634\\
0.8	0.400004081644163\\
0.85	0.365448080906375\\
0.9	0.336796711292426\\
0.95	0.312464897749409\\
1	0.291427975997776\\
1.05	0.272982995417578\\
1.1	0.256625550693032\\
1.15	0.241980599015837\\
1.2	0.228761178150248\\
1.25	0.216742617622251\\
1.3	0.205745787579953\\
1.35	0.195625818526856\\
1.4	0.186264240594933\\
1.45	0.177563327078946\\
1.5	0.169441903685867\\
1.55	0.16183216345726\\
1.6	0.154677193385607\\
1.65	0.147929019712731\\
1.7	0.141547041531545\\
1.75	0.135496762010768\\
1.8	0.129748752319662\\
1.85	0.124277800470402\\
1.9	0.11906220901266\\
1.95	0.114083213749396\\
2	0.109324501591398\\
2.05	0.104771810079511\\
2.1	0.100412594454756\\
2.15	0.0962357507566415\\
2.2	0.0922313854838603\\
2.25	0.0883906239979041\\
2.3	0.0847054511853642\\
2.35	0.0811685789874322\\
2.4	0.0777733363055989\\
2.45	0.0745135775383645\\
2.5	0.0713836066237536\\
2.55	0.0683781139791151\\
2.6	0.0654921241610266\\
2.65	0.0627209524285788\\
2.7	0.060060168694711\\
2.75	0.0575055676023274\\
2.8	0.055053143672702\\
2.85	0.0526990706498954\\
2.9	0.0504396843121622\\
2.95	0.0482714681443229\\
3	0.046191041367749\\
3.05	0.0441951489102638\\
3.1	0.0422806529696797\\
3.15	0.040444525884197\\
3.2	0.0386838440724365\\
3.25	0.0369957828471106\\
3.3	0.0353776119406276\\
3.35	0.0338266916094216\\
3.4	0.0323404692074676\\
3.45	0.0309164761390897\\
3.5	0.0295523251174623\\
3.55	0.0282457076687111\\
3.6	0.026994391832701\\
3.65	0.0257962200208539\\
3.69999999999999	0.0246491069989844\\
3.74999999999999	0.0235510379694515\\
3.79999999999999	0.0225000667321285\\
3.84999999999999	0.0214943139079717\\
3.89999999999999	0.0205319652124804\\
3.94999999999999	0.0196112697692191\\
3.99999999999999	0.0187305384559183\\
4.04999999999999	0.0178881422775803\\
4.09999999999999	0.0170825107625583\\
4.14999999999999	0.0163121303788212\\
4.19999999999999	0.0155755429686055\\
4.24999999999999	0.0148713442004388\\
4.29999999999999	0.0141981820381345\\
4.34999999999999	0.0135547552268228\\
4.39999999999999	0.0129398117964406\\
4.44999999999999	0.0123521475833582\\
4.49999999999999	0.0117906047710024\\
4.54999999999999	0.0112540704504486\\
4.59999999999999	0.0107414752020217\\
4.64999999999999	0.0102517916989657\\
4.69999999999999	0.00978403333423378\\
4.74999999999999	0.00933725287141364\\
4.79999999999999	0.00891054112074825\\
4.84999999999999	0.0085030256411402\\
4.89999999999999	0.00811386946894803\\
4.94999999999999	0.00774226987429324\\
5	0.0073874571455036\\
};
\addlegendentry{$\psi$}

\addplot [color=blue, line width=1.5pt]
  table[row sep=crcr]{%
0	8.91607960096898\\
0.0284420231278838	8.57079251135209\\
0.0568840462557676	8.23294389729056\\
0.0853260693836513	7.90266931271523\\
0.170057393252485	6.96542384007019\\
0.254788717121319	6.09935775637897\\
0.339520040990153	5.30662511888036\\
0.407561369361813	4.72476959274945\\
0.452438883911975	4.36901423813812\\
0.488839376531556	4.09786449938059\\
0.525239869151138	3.84322626848403\\
0.561640361770719	3.60602362326552\\
0.598040854390301	3.38719963328228\\
0.652428272721119	3.09606262112308\\
0.706815691051937	2.84808669752553\\
0.761203109382754	2.63960793606046\\
0.815590527713572	2.462735447913\\
0.86997794604439	2.30817413689529\\
0.930032128036923	2.15436961322514\\
0.990086310029456	2.01228285898408\\
1.05014049202199	1.87860144838444\\
1.11019467401452	1.75163417636866\\
1.17024885600705	1.63065225396395\\
1.23030303799959	1.51552372311448\\
1.31103948961137	1.37016230345981\\
1.39177594122314	1.23574595709186\\
1.47251239283492	1.11220633491038\\
1.5532488444467	0.999282892158877\\
1.63398529605848	0.896572943890638\\
1.71472174767026	0.803551232625015\\
1.86150235024913	0.657226134033224\\
2.008282952828	0.536806854271108\\
2.15506355540687	0.438242170964007\\
2.30184415798574	0.357848771373553\\
2.44862476056461	0.292419867283114\\
2.59540536314348	0.239216387311805\\
2.81883580663849	0.17667563242616\\
3.04226625013351	0.130945716213665\\
3.26569669362852	0.0974179985954235\\
3.48912713712353	0.0727652214524889\\
3.71255758061854	0.0545828705961389\\
3.93598802411356	0.0411262831827359\\
4.34465404710142	0.0248031035213491\\
4.75332007008929	0.0152123525530437\\
5	0.0114291029843747\\
};
\addlegendentry{$\psi_d$}

\addplot [color=green, line width=1.5pt]
  table[row sep=crcr]{%
0	8.08765247622279\\
0.0465606631094765	7.70968224926614\\
0.093121326218953	7.33241951539807\\
0.13968198932843	6.95628315504631\\
0.252329460413069	6.0567976538646\\
0.364976931497708	5.18013949515016\\
0.447939256356094	4.55891825128925\\
0.508656075353339	4.12438050484453\\
0.549060584598957	3.84784527343093\\
0.562271600435405	3.76002002046277\\
0.575482616271853	3.67363130894141\\
0.588693632108302	3.58875892199571\\
0.60190464794475	3.5054868652044\\
0.636639417210236	3.29483193719769\\
0.671374186475723	3.09745230347895\\
0.706108955741209	2.91485752623106\\
0.740843725006695	2.7482422745432\\
0.775578494272182	2.59818447543117\\
0.828833326051885	2.39935814502548\\
0.882088157831589	2.2329327171779\\
0.935342989611293	2.09044658287886\\
0.988597821390996	1.96436756430557\\
1.0418526531707	1.84959378803983\\
1.10909333343818	1.71596433912055\\
1.17633401370565	1.59203108282812\\
1.24357469397313	1.47608302126028\\
1.3108153742406	1.36722428804954\\
1.37805605450808	1.26498888977779\\
1.48034105032024	1.12151842493551\\
1.58262604613239	0.992044325865349\\
1.68491104194455	0.875800731495779\\
1.7871960377567	0.771927537933034\\
1.88948103356886	0.679495870931586\\
2.06291376142742	0.546149204393804\\
2.23634648928599	0.438270231376454\\
2.40977921714456	0.351493206044586\\
2.58321194500312	0.281927268127379\\
2.75664467286169	0.226260864533954\\
2.98215538946814	0.170109662056497\\
3.2076661060746	0.128126395895114\\
3.43317682268105	0.0967183402523983\\
3.65868753928751	0.0731798470940831\\
4.05134876201956	0.0453307597260164\\
4.44400998475162	0.0282883640770186\\
4.83667120748368	0.0177572925249714\\
5	0.0146429434987629\\
};
\addlegendentry{$\psi_c$}

\end{axis}

\begin{axis}[%
width=2.2in,
height=1.4in,
at={(0in,0in)},
scale only axis,
xmin=0,
xmax=1,
ymin=0,
ymax=1,
axis line style={draw=none},
ticks=none,
axis x line*=bottom,
axis y line*=left,
legend style={legend cell align=left, align=left, draw=white!15!black}
]
\end{axis}
\end{tikzpicture}
\caption{ Trajectories of an undirected graph (red), a directed graph (blue), and the cycle graph 1-2-4-3 (green).}
\label{fig:formssimul}
\end{figure*}

For the case of directed graphs with cycles, proving stability still remains a challenge. One natural first step could be to see if equilibrium points of \eqref{eq:formationdirected} are Hurwitz-stable with respect to perturbations. Jacobian matrix of \eqref{eq:formationdirected} is given by $\mH_+\mR_B|_{\vct x^*}$ or -$\mH_+\diag(\frac{1}{d_{ij}^*}\mP(\vct u_{ij}^*))\mH^\trp$, where $\mR_B$ is called the \emph{bearing rigidity matrix}. This matrix is very similar to the \emph{directed bearing Laplacian matrix} $\mL_B=\mH_+\diag(\mP(\vct u_{ij}^*))\mH^\trp$ defined in \cite{zhao2015bearing}. For the graph given in Fig. \ref{fig:rigidity:1} with positions $\vct x_1=[0,0]^\trp$, $\vct x_2=[2,0]^\trp$, $\vct x_3=[3,-4]^\trp$, and $\vct x_4=[2,-2]^\trp$, Jacobian matrix of \eqref{eq:formationdirected} and $-\mL_B$ both have an eigenvalue with a positive real part, which rejects the conjecture made in \cite{zhao2015bearing} on bearing Laplacian matrix having eigenvalues with nonnegative real parts.

\section{SIMULATION RESULTS}
In this section, we present simulation results for the both bearing-only consensus and formation control problems. In Fig. \ref{fig:conssimul}, the trajectory of an undirected and directed graph with the same vertices is given for the consensus problem. In Fig. \ref{fig:formssimul}, trajectories of an undirected graph, a strongly connected graph and a directed cycle graph is presented for the formation control problem.

\section{CONCLUSIONS}
We presented stability results for the bearing-only consensus and formation control problems. There are remaining problems which need further attention. In the consensus problem of strongly connected directed graphs, finite-time convergence remains unsolved. Also, bearing-only formation control in cyclic directed graphs is not addressed yet and the notion of bearing persistence needs more study in the future.







\bibliographystyle{ieee}
\bibliography{bibroot}

\begin{thebibliography}{10}\itemsep=-1pt

\bibitem{arrigoni2018bearing}
F.~Arrigoni and A.~Fusiello.
\newblock Bearing-based network localizability: a unifying view.
\newblock {\em IEEE transactions on pattern analysis and machine intelligence},
  2018.

\bibitem{bacciotti1999stability}
A.~Bacciotti and F.~Ceragioli.
\newblock Stability and stabilization of discontinuous systems and nonsmooth
  lyapunov functions.
\newblock {\em ESAIM: Control, Optimisation and Calculus of Variations},
  4:361--376, 1999.

\bibitem{bishop2015distributed}
A.~N. Bishop, M.~Deghat, B.~D. Anderson, and Y.~Hong.
\newblock Distributed formation control with relaxed motion requirements.
\newblock {\em International Journal of Robust and Nonlinear Control},
  25(17):3210--3230, 2015.

\bibitem{cortes2006finite}
J.~Cort{\'e}S.
\newblock Finite-time convergent gradient flows with applications to network
  consensus.
\newblock {\em Automatica}, 42(11):1993--2000, 2006.

\bibitem{cortes2008discontinuous}
J.~Cortes.
\newblock Discontinuous dynamical systems.
\newblock {\em IEEE Control systems magazine}, 28(3):36--73, 2008.

\bibitem{cortes2005coordination}
J.~Cort{\'e}s and F.~Bullo.
\newblock Coordination and geometric optimization via distributed dynamical
  systems.
\newblock {\em SIAM Journal on Control and Optimization}, 44(5):1543--1574,
  2005.

\bibitem{kachurovskii1960monotone}
R.~Kachurovskii.
\newblock Monotone operators and convex functionals.
\newblock {\em Uspekhi Matematicheskikh Nauk}, 15(4):213--215, 1960.

\bibitem{karimian2017theory}
A.~Karimian and R.~Tron.
\newblock Theory and methods for bearing rigidity recovery.
\newblock In {\em 2017 IEEE 56th Annual Conference on Decision and Control
  (CDC)}, pages 2228--2235. IEEE, 2017.

\bibitem{minsker2015geometric}
S.~Minsker et~al.
\newblock Geometric median and robust estimation in banach spaces.
\newblock {\em Bernoulli}, 21(4):2308--2335, 2015.

\bibitem{nie2008semidefinite}
J.~Nie, P.~A. Parrilo, and B.~Sturmfels.
\newblock Semidefinite representation of the k-ellipse.
\newblock In {\em Algorithms in algebraic geometry}, pages 117--132. Springer,
  2008.

\bibitem{olfati2004consensus}
R.~Olfati-Saber and R.~M. Murray.
\newblock Consensus problems in networks of agents with switching topology and
  time-delays.
\newblock {\em IEEE Transactions on automatic control}, 49(9):1520--1533, 2004.

\bibitem{schiano2018dynamic}
F.~Schiano and R.~Tron.
\newblock The dynamic bearing observability matrix nonlinear observability and
  estimation for multi-agent systems.
\newblock In {\em 2018 IEEE International Conference on Robotics and Automation
  (ICRA)}, pages 1--8. IEEE, 2018.

\bibitem{trinh2018bearing}
M.~H. Trinh, S.~Zhao, Z.~Sun, D.~Zelazo, B.~D. Anderson, and H.-S. Ahn.
\newblock Bearing-based formation control of a group of agents with
  leader-first follower structure.
\newblock {\em IEEE Transactions on Automatic Control}, 64(2):598--613, 2018.

\bibitem{tron2016bearing}
R.~Tron, J.~Thomas, G.~Loianno, K.~Daniilidis, and V.~Kumar.
\newblock Bearing-only formation control with auxiliary distance measurements,
  leaders, and collision avoidance.
\newblock In {\em 2016 IEEE 55th Conference on Decision and Control (CDC)},
  pages 1806--1813. IEEE, 2016.

\bibitem{wu2005rayleigh}
C.~W. Wu.
\newblock On rayleigh--ritz ratios of a generalized laplacian matrix of
  directed graphs.
\newblock {\em Linear algebra and its applications}, 402:207--227, 2005.

\bibitem{wu2005synchronization}
C.~W. Wu.
\newblock Synchronization in networks of nonlinear dynamical systems coupled
  via a directed graph.
\newblock {\em Nonlinearity}, 18(3):1057, 2005.

\bibitem{zelazo2015bearing}
D.~Zelazo, P.~R. Giordano, and A.~Franchi.
\newblock Bearing-only formation control using an se (2) rigidity theory.
\newblock In {\em 2015 54th ieee conference on decision and control (cdc)},
  pages 6121--6126. IEEE, 2015.

\bibitem{zhao2015bearing}
S.~Zhao and D.~Zelazo.
\newblock Bearing-based formation stabilization with directed interaction
  topologies.
\newblock In {\em 2015 54th IEEE Conference on Decision and Control (CDC)},
  pages 6115--6120. IEEE, 2015.

\bibitem{zhao2015bearingrigid}
S.~Zhao and D.~Zelazo.
\newblock Bearing rigidity and almost global bearing-only formation
  stabilization.
\newblock {\em IEEE Transactions on Automatic Control}, 61(5):1255--1268, 2015.

\bibitem{zhao2017flexible}
S.~Zhao and R.~Zheng.
\newblock Flexible bearing-only rendezvous control of mobile robots.
\newblock In {\em 2017 36th Chinese Control Conference (CCC)}, pages
  8051--8056. IEEE, 2017.

\end{thebibliography}

\end{document}